\numberwithin{equation}{section}
\theoremstyle{plain}
\newtheorem{thm}{Theorem}[section]
\newtheorem{prop}{Proposition}
\newtheorem{assu}{Assumption}
\newtheorem{defn}{Definition}[section]
\newtheorem*{rem}{Remark}
\theoremstyle{definition}
\theoremstyle{remark}
\title{Prediction of high-frequency futures return directions based on the mean uncertainty classification methods: An application in China's future market}
\author[1]{Ying Peng}
\author[1]{Yifan Zhang}
\author[1]{Xin Wang}
\affil[1]{Shandong University-Zhongtai Securities Institute for Financial Studies, Shandong University, 250100, China}
\begin{document}
\maketitle

\begin{abstract}

In this paper, we mainly focus on the prediction of short-term average return directions in China's high-frequency futures market. As minor fluctuations with limited amplitude and short duration are typically regarded as random noise, only price movements of sufficient magnitude qualify as statistically significant signals. Therefore data imbalance emerges as a key problem during predictive modeling. From the view of data distribution imbalance, we employee the mean-uncertainty logistic regression (mean-uncertainty LR) classification method under the sublinear expectation (SLE) framework, and further propose the mean-uncertainty support vector machines (mean-uncertainty SVM) method for the prediction. Corresponding investment strategies are developed based on the prediction results. For data selection, we utilize trading data and limit order book data of the top 15 liquid products among the most active contracts in China's future market. Empirical results demonstrate that comparing with conventional LR-related and SVM-related imbalanced data classification methods, the two mean-uncertainty approaches yields significant advantages in both classification metrics and average returns per trade.

\textbf{Keywords} high-frequency futures data, data imbalance, sublinear expectation (SLE), mean-uncertainty, machine learning% 关键词
\end{abstract}

% 第一章
\section{Introduction}\label{sec:intro}

‌As an important component of the global financial system, futures markets have experienced rapid development over the past two decades. Recent advances in information technology and computational capabilities facilitated high-frequency futures trading, enabling future transactions to be executed on a second or even millisecond timescale. Since high-frequency data not only elucidates market microstructure but also captures traders' behavioral patterns and real-time market reactions, how to effectively leverage such data for prediction has become a critical challenge in high-frequency futures trading research. Consequently, predicting the direction of market price movement based on high-frequency futures trading data stands as a prominent research focus.

Assets prices often exhibit strong persistence and predictability over ultra-short horizons, meaning that the direction of price movements can be forecasted within short time intervals. Fan et al. \cite{Fan2022how} found that the predictability of returns is much more systematic and significant under ultra-short-term and high-frequency conditions comparing with low-frequency data. They demonstrated that the price movement directions of nearly all stocks is highly predictable over ultra-short periods, and validated this conclusion by implementing machine learning algorithms with high-frequency trading data. Existing studies \cite{andersen2005correcting,baumeister2015high,lyocsa2021stock,ye2024short,peng2024attention, song2024modelling} have also confirmed the predictive power of high-frequency data from various perspectives.

In high-frequency futures trading markets, minor fluctuations with limited amplitude and short duration are typically regarded as random noise, failing to reflect actual market trends. ‌Consequently, only price movements of sufficient magnitude qualify as statistically significant signals.‌ Since such significant signals are relatively few, data imbalance emerges as a key problem during predictive modeling, necessitating specialized techniques.  Current methodologies for imbalanced data classification primarily fall into two different categories:‌ sampling-based methods and algorithmic optimization techniques.‌ ‌On the sampling front‌, conventional strategies include random undersampling (RUS), synthetic minority oversampling (SMOTE), and their hybrid variants \cite{chawla2002smote,kubat1997addressing,vairetti2024efficient,sun2024nearest,sun2024undersampling}. ‌In the algorithmic domain‌, researchers focus on structural modifications of models or loss function adaptations to better accommodate minority class samples \cite{alvim2010daily,chen2024cost,chamlal2024hybrid,guo2024adaptive}.
Nevertheless, while both methodological approaches seek to improve classification accuracy through either sample quantity manipulation or weight adjustment, they fail to adequately address inherent uncertainty of the data distribution.

Recent research indicates that nonlinear expectation theory \cite{peng1997backward,peng2004filtration,peng2005nonlinear,peng2010nonlinear} can effectively characterize distribution uncertainties of random variables. Different from the classical linear expectation framework \cite{kolmogorov2018foundations}, a nonlinear expectation represents a ‌family‌ of potential probability distributions. In 2007, Peng \cite{peng2007g}  further introduced the ‌sublinear expectation (SLE) theory‌. Under the SLE framework, the ‌distribution uncertainty‌ of random variables primarily reflects in two aspects: ‌mean uncertainty‌ and ‌volatility uncertainty‌. With the development of the theoretical research, the application significance of the SLE framework has grown increasingly apparent. Hu and Ji {\cite{hu2016stochastic,hu2017dynamic}} derived the maximum principle and dynamic programming principle under volatility uncertainty by solving stochastic optimal control problems. Based on the law of large numbers and the central limit theorem under nonlinear expectations, Peng and Jin \cite{jin2016optimal} proposed a max-mean estimation approach to obtain optimal unbiased estimators under distribution uncertainty. In the application area, Peng and Yang \cite{peng2023improving} developed a novel Value-at-Risk (VaR) predictor under model uncertainty, which demonstrates notable advantages over most existing benchmark VaR methods. In the context of regression analysis, Yang and Yao \cite{yang2021linear} investigated robust linear regression models where the mean and variance of covariates are uncertain. They proposed two estimation strategies for the regression coefficients under a given loss function. Related work includes that of Xu and Xuan \cite{xu2019nonlinear} who approached the problem from an algorithmic perspective and replaced the classical squared loss function with a more robust max-mean loss function. Recently, the SLE theory has also demonstrated advantages in the field of ‌imbalanced data classification‌. Ji et al. \cite{ji2023imbalanced} proposed ‌mean-uncertain logistic regression (mean-uncertainty LR)‌ and ‌volatility-uncertain logistic regression (volatility-uncertainty LR)‌ methods for binary classification under the SLE. Compared to conventional ‌sampling-based methods for imbalanced classification‌, these two approaches have exhibited superior performance across multiple datasets, and the mean-uncertain LR method demonstrate particularly significant classification efficacy.

In order to effectively predict market price movement directions based on high-frequency trading data in China's futures market, we apply and extend the method of Ji et al. \cite{ji2023imbalanced} for the predictions of short-term average return directions under the SLE framework. First, extended from the mean-uncertainty LR method of \cite{ji2023imbalanced}, we develop the ‌mean-uncertainty support vector machine (mean-uncertainty SVM) approach and provide theoretical proof of its property. Subsequently, we apply both the ‌mean-uncertainty LR and the ‌mean-uncertainty SVM models to China's high-frequency futures market for the prediction of short-term average return directions, then develop corresponding investment strategies based on the prediction results. Specifically, two distinct binary classification tasks are formulated for the prediction: the first predicts "upward movement" versus "non-upward" states, while the second targets "downward movement" versus "non-downward" states. In the empirical analysis, we choose the top 15 futures products with higher liquidity in the most active contracts of China's futures market, using the corresponding trading data and limit order book data as the original dataset. Empirical results demonstrate that both the two mean-uncertainty methods outperform conventional sampling-based classification approaches in predictive performance. From the view of average returns per trade, both methods surpass conventional sampling-based classifiers on ‌80\% of futures products for "upward / non-upward" predictions; for "downward/non-downward" predictions, the mean-uncertainty LR achieves higher returns on ‌80\% of products‌, and the mean-uncertainty SVM shows superior performance on ‌67\% of products‌.

% 第二章
\section{Methodology}\label{sec:theo}

In this section, we mainly present the two mean-uncertainty classification methods employed in this paper for the prediction of short-term return directions on high-frequency futures. Firstly, we briefly introduce the idea of the mean-uncertainty LR method of Ji et al.~\cite{ji2023imbalanced} under the SLE framework. Then inspired by \cite{ji2023imbalanced}, we further develop the mean-uncertainty SVM method, which can be viewed as a correction of SVM method under model uncertainty. The related definitions and theorems SLE can be seen in the Appendix.

% 2.1
\subsection{Mean-uncertainty LR}\label{subsec:MU_LR}

Under the framework of SLE, Ji et al. \cite{ji2023imbalanced} addressed the imbalanced classification problem from the view of model uncertainty and developed two improved  logistic regression (LR) method: the mean-uncertainty LR and volatility-uncertainty LR methods. Their empirical results demonstrate that the mean-uncertainty LR outperforms conventional imbalanced data classification methods across different datasets.

Motivated by the above research results, we employ the mean-uncertainty LR approach to predict short-term average return directions, where only very significant return fluctuations are considered to be "upward" or "downward". Then the investment policy can be developed according to the prediction results. For example, when the average return directions are predicted to be "upward", the investor can buy a certain number of futures.

Before introducing the mean-uncertainty LR method, we introduce the standard LR model via the classical linear regression framework. Given a dataset $\left\{ \left( {\boldsymbol{x}_{i}},Y_{i}^{*} \right) \right\}_{i=1}^{n}$, consider the following linear regression model:
\begin{equation*}
Y_{i}^{*}=\boldsymbol{x_{i}'\beta_{1}} + \beta_{0} +\delta_{i} ,i=1,2,...,n,
\end{equation*}
where $\boldsymbol{x_{i}}\in \mathbb{R} ^{n}$ is the covariate vector, $Y_{i}^{*}\in \mathbb{R}$ is a continuous latent response variable, and $\delta_{i}$ is an independently and identically distributed (i.i.d.) disturbance term. $\boldsymbol \beta=(\boldsymbol \beta_{1}',\beta_{0})' \in \mathbb{R} ^{n+1}$ is a vector of parameters to be estimated. In practical binary classification scenarios, it is often the case that the values of the latent variable \( Y_{i}^* \) are not directly observable. Instead, what we can observe are the binary classification labels \( Y_{i}\), such as $Y_i = \begin{cases}
0, & \text{if } Y_i^* < 0, \\
1, & \text{if } Y_i^* \ge 0.
\end{cases}$ Without loss of generality, we assume that label "1" denotes the positive class and label "0" the negative class.

The LR model links the latent response variable $Y_i^*$ to the binary observation $Y_i$ via the sigmoid function. Under the classical linear i.i.d. assumptions, when the disturbance term $\delta_i$ follows a standard logistic distribution, $Y_i|\boldsymbol{x}_i$ follows a degenerate Bernoulli distribution where $P(Y_i=1|\boldsymbol{x}_i)$ is a deterministic probability. Define $z_i = \boldsymbol{x}_i' \boldsymbol{\beta_{1}} + \beta_{0}$, \(\sigma(z) = \frac{1}{1 + e^{-z}}\), the conditional probability for \(\{Y_i = 1\}\) is:
\begin{equation*}
P(Y_i=1|\boldsymbol{x_i}) = E[Y_i|\boldsymbol{x_i}]=\sigma(z) = \frac{1}{1 + e^{-\boldsymbol{x_{i}'}\boldsymbol{\beta_1}+{\beta_0}}}.
\end{equation*}
Observations are classified as positive (${Y}_i=1$) when $P(Y_i=1 \mid \boldsymbol{x}_i) > 0.5$, and negative otherwise.

Under the SLE framework, introducing the mean-uncertainty in $\delta_i$ implies that $Y_i|\boldsymbol{x}_i$ is governed by a family of Bernoulli distributions, which generalize classical linear expectation models and better captures real-world data characteristics.

\begin{assu}\label{ass2.1}
Assume that the disturbance term \(\delta_i\) satisfies \(\delta_i = \varepsilon_i + M_i\), where \(\{ \varepsilon_i, M_i \}_{i=1}^{n}\) is an independent and identically distributed (i.i.d.) sequence under the SLE. \(M_i\) follows the maximum distribution \(M_{[\underline{\mu}, \overline{\mu}]}\), and \(\varepsilon_i\) is nonlinearly independent of \(M_i\). In particular, \(\varepsilon_i\) follows the standard logistic distribution.
\end{assu}

\begin{prop}\label{pro2.1}
Under Assumption~\ref{ass2.1}, given the parameters of the LR model $\boldsymbol \beta=(\boldsymbol \beta_{1}',\beta_{0})'$ and the sample \(\boldsymbol{x_i}\), the probability uncertainty of \(\{Y_i = 1\}\) can be represented by the uncertainty interval \(\left[ \underline{P_i}, \overline{P_i} \right]\), where:

\[
\underline{P_i} = \sigma \left( \boldsymbol{x_i}^{\prime} \boldsymbol{\beta_1} + \beta_0 + \underline{\mu} \right), \quad \overline{P_i} = \sigma \left( \boldsymbol{x_i}^{\prime} \boldsymbol{\beta_1} + \beta_0 + \overline{\mu} \right),
\]
and \(\sigma(x) = \frac{1}{1 + e^{-x}}\).

\end{prop}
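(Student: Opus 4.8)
The plan is to transfer the event $\{Y_i=1\}$ onto the disturbance term $\delta_i$ and then compute its upper and lower capacities under the sublinear expectation $\hat{\mathbb{E}}[\cdot]$ by exploiting the explicit additive structure $\delta_i=\varepsilon_i+M_i$ from Assumption~\ref{ass2.1}. Once $\boldsymbol\beta$ and $\boldsymbol x_i$ are fixed, write $z_i=\boldsymbol x_i'\boldsymbol\beta_1+\beta_0$, which is a constant. By the definition of the label, $\{Y_i=1\}=\{Y_i^*\ge 0\}=\{\delta_i\ge -z_i\}$, so the probability uncertainty of $\{Y_i=1\}$ is captured by the upper capacity $\overline{P_i}=\hat{\mathbb{E}}\!\left[\mathbf{1}_{\{\delta_i\ge -z_i\}}\right]$ and the lower capacity $\underline{P_i}=1-\hat{\mathbb{E}}\!\left[\mathbf{1}_{\{\delta_i< -z_i\}}\right]$.

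The core step is to identify the effective family of laws of $\delta_i$. Using the robust representation of a sublinear expectation (a family $\mathcal P$ of probability measures with $\hat{\mathbb{E}}[\cdot]=\sup_{P\in\mathcal P}E_P[\cdot]$), together with the nonlinear independence of $\varepsilon_i$ and $M_i$ and the fact that $M_i$ follows the maximum distribution $M_{[\underline\mu,\overline\mu]}$, I would show that for every bounded continuous $\varphi$,
\[
\hat{\mathbb{E}}\big[\varphi(\delta_i)\big]
=\hat{\mathbb{E}}\Big[\,\hat{\mathbb{E}}\big[\varphi(y+\varepsilon_i)\big]\big|_{y=M_i}\,\Big]
=\sup_{\mu\in[\underline\mu,\overline\mu]} E\big[\varphi(\mu+\varepsilon_i)\big],
\]
where the first equality is the independence identity (integrating $\varepsilon_i$ first, then $M_i$), and the second uses that $\varepsilon_i$ has a classical standard logistic law under $\hat{\mathbb{E}}$ (so $\hat{\mathbb{E}}[\psi(\varepsilon_i)]=E[\psi(\varepsilon_i)]$) and the defining property of the maximum distribution. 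In words, the family of laws of $\delta_i$ reduces, for this computation, to the logistic laws $\{\mathcal L(\varepsilon_i+\mu):\mu\in[\underline\mu,\overline\mu]\}$ with drifting location.

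Granting this, the computation is short. For each fixed $\mu$, $P(\mu+\varepsilon_i\ge -z_i)=P(\varepsilon_i\ge -z_i-\mu)=1-F(-z_i-\mu)$, and since the standard logistic CDF is exactly $F(t)=\sigma(t)$ with the identity $1-\sigma(-t)=\sigma(t)$, this equals $\sigma(z_i+\mu)$. As $\sigma$ is strictly increasing, taking the supremum over $\mu\in[\underline\mu,\overline\mu]$ yields $\overline{P_i}=\sigma(z_i+\overline\mu)$ and the infimum yields $\underline{P_i}=\sigma(z_i+\underline\mu)$; substituting $z_i=\boldsymbol x_i'\boldsymbol\beta_1+\beta_0$ gives the claimed formulas, and $\underline\mu\le\overline\mu$ together with monotonicity of $\sigma$ shows $\underline{P_i}\le\overline{P_i}$, so $[\underline{P_i},\overline{P_i}]$ is a genuine interval.

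The only genuinely delicate point — and the one I expect to be the main obstacle — is that the indicator $\mathbf 1_{\{\delta_i\ge -z_i\}}$ is not in the test-function class $C_{b,\mathrm{Lip}}$ for which the independence identity and the maximum-distribution formula are stated, so the displayed identity cannot be applied to it directly. I would resolve this by a sandwiching argument: choose Lipschitz functions $\varphi_n\uparrow\mathbf 1_{(-z_i,\infty)}$ and $\psi_n\downarrow\mathbf 1_{[-z_i,\infty)}$, apply the identity to each $\varphi_n$ and $\psi_n$, and pass to the limit. Since the logistic law is atomless and the family $\{\mathcal L(\varepsilon_i+\mu)\}$ is tight with $\mu$ ranging over the compact interval $[\underline\mu,\overline\mu]$, the suprema of $E[\varphi_n(\mu+\varepsilon_i)]$ and $E[\psi_n(\mu+\varepsilon_i)]$ both converge to $\sup_{\mu}\sigma(z_i+\mu)$, and the continuity of $t\mapsto\sigma(t)$ forces the upper and lower approximations to meet at the same value; the analogous argument handles $\underline{P_i}$.
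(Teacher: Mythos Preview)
Your approach is correct and essentially coincides with the paper's: the paper does not prove Proposition~\ref{pro2.1} explicitly (it is quoted from Ji et al.~\cite{ji2023imbalanced}), but the identical argument appears verbatim in its proof of the analogous Proposition~\ref{pro2.2}, where one writes the upper probability as $\widehat{\mathbb{E}}\bigl[\mathbf{1}_{\{z_i+\varepsilon_i+M_i\ge 0\}}\bigr]$, integrates out the logistic $\varepsilon_i$ to obtain $\sigma(z_i+M_i)$, and then invokes the maximum distribution of $M_i$ to reach $\sigma(z_i+\overline\mu)$. Your treatment is in fact more careful than the paper's, which applies the independence and maximum-distribution identities directly to the indicator without the Lipschitz sandwiching step you supply.
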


Under Assumption~\ref{ass2.1}, if \( \boldsymbol{\hat{\beta}} \) is the parameter estimate obtained from the initial logistic regression model, then according to Proposition ~\ref{pro2.1}, the sample should satisfy the following moment condition:

\begin{align*}
   \widehat{\mathbb{E}}[{{Y}_{i}}-\sigma (\boldsymbol{x_{i}^{'}}\boldsymbol{{{\hat{\beta }}}_{1}}+{{{\hat{\beta }}}_{0}}+{\overline{{\mu }} })]&=0, \\
  -\widehat{\mathbb{E}}[-\left( {{Y}_{i}}-\sigma (\boldsymbol{x_{i}^{'}}{\boldsymbol{{\hat{\beta }}}_{1}}+{{{\hat{\beta }}}_{0}}+\underline{{\mu }} ) \right)]&=0.
\end{align*}
Where \( \overline{\mu} \) and \( \underline{\mu} \) are the error distribution parameters to be estimated. Using the method of moment estimation, \( n - N \) error statistics are constructed:
\begin{align*}
    Z_{N}^{k} :=\; & \frac{1}{N}\sum_{j=1}^{N}\Bigl[ Y_{k+j-1} - \sigma\Bigl(\boldsymbol{ x_{k+j-1}'\hat{\beta}_{1}} + \hat{\beta}_{0} + \overline{\mu} \Bigr) \Bigr],k=1,\cdots,n-N.
\end{align*}

According to Theorem~\ref{the2.1}, when \( N \) is sufficiently large, \( Z_{N}^{k} \) converges in distribution to a maximum distribution whose upper mean parameter is $\widehat{\mathbb{E}}[Y_i - \sigma (\boldsymbol{x_i^{'} \hat{\beta}_1} + \hat{\beta}_0 + \overline{\mu})]$,
%\[
%\]
then we have
\begin{align*}
    \underset{N\to \infty }{\mathop{\lim }}\,\widehat{\mathbb{E}}\left[ Z_{N}^{k} \right]=\widehat{\mathbb{E}}[{{Y}_{i}}-\sigma (\boldsymbol{x_{i}^{'}{{\hat{\beta }}_{1}}}+{{\hat{\beta }}_{0}}+\overline{{\mu }} )].
\end{align*}

By Theorem~\ref{the2.3}, the sample statistic \( \underset{1 \le k \le n - N}{\max} Z_{N}^{k} \) is the asymptotically optimal unbiased estimate of the population moment \( \widehat{\mathbb{E}}[Y_i - \sigma (\boldsymbol{x_i^{'} \hat{\beta}_1} + \hat{\beta}_0 + \overline{\mu})] \). The empirical version of the moment condition is:
\begin{align*}
    \underset{1\le k\le n-N}{\mathop{\max }}\,\frac{1}{N}\sum\limits_{j=1}^{N}{[{{Y}_{k+j-1}}-\sigma (\boldsymbol{x_{k+j-1}^{'}{{{\hat{\beta }}}_{1}}}+{{{\hat{\beta }}}_{0}}+\overline{\mu })]}=0.
\end{align*}

The probability of \( \{Y_i = 1\} \) can be expressed in terms of the upper probability, which is defined as:

\[
\underset{P \in \mathscr{P} _{\Theta }}{\sup} P(Y_i = 1 | x_i) = \sigma (\boldsymbol{x_i' \beta_1}+ \beta_0 + \overline{\mu}).
\]

In summary, in the method of \cite{ji2023imbalanced}, the approximate estimate ${{\hat{\overline{\mu }}}_{N}}$ of the error distribution parameter is firstly determined by solving a nonlinear equation:
\begin{align}
    \underset{1\le k\le n-N}{\mathop{\max }}\,\frac{1}{N}\sum\limits_{j=1}^{N}{[{{Y}_{k+j-1}}-\sigma (\boldsymbol{{{x}^{'}}{{{\hat{\beta }}}_{1}}}+}{{\hat{\beta }}_{0}}+\mu )]=0\Rightarrow {{\hat{\overline{\mu }}}_{N}},
\end{align}
where \( N \) represents the size of the sliding window, which can be treated as a hyper-parameter of the model and selected using a grid search method.

In the following, when the approximate estimate \( \hat{\overline{\mu}}_N \) of \( \mu \) is obtained, the upper probability is adjusted accordingly. Then we have:

\begin{align}\label{upper P for mean-uncertainty LR}
    \overline{P_i}=\underset{P\in {\mathscr{P} _{\Theta }}}{\mathop{\sup }}\,P({{Y}_{i}}=1|\boldsymbol{{x}_{i}})=\sigma (\boldsymbol{{x}'{{\hat{\beta}}_{1}}}+{{\hat{\beta }}_{0}}+{{\hat{\overline{\mu }}}_{N}}).
\end{align}

If the upper probability $\overline{P_i}$ is greater than 0.5, the sample is classified as the minority class; otherwise, it is classified as the majority class. Similarly, we can estimate $\underline{{\mu}}_N$ of the error distribution parameter and then obtain the lower probability $\underline{P_i}$, see the details in \cite{ji2023imbalanced}. According to Proposition ~\ref{pro2.1}, when $\overline{\mu }=\underline{\mu }$, we have $\overline{P_i}=\underline{P_i}$, then the probability of $Y_i=1$ degenerate to the traditional LR prediction $\sigma \left( \boldsymbol{x_i}^{\prime} \boldsymbol{\beta_1} + \beta_0 \right)$.

\subsection{Mean-uncertainty SVM}\label{subsec:MU_SVM}

Inspired by the work of \cite{ji2023imbalanced}, we further develop the mean-uncertainty SVM classification method, which can be viewed as a correction method of the SVM under SLE. In SVM, the samples are mapped to a high-dimensional space with a kernel function, and the distance from each sample point to the hyperplane is computed as follows:

\begin{align}\label{1}
    d\left( \boldsymbol{x} \right)=\sum\limits_{j=1}^{N}{{{\alpha }_{j}}\tilde{{Y}_{j}}K\left( \boldsymbol{x},\boldsymbol{\tilde{{x}}_{j}} \right)}+\delta,
\end{align}

where $\boldsymbol{\tilde{{x}}_{j}}$ is the test sample to be classified, \(\boldsymbol{\tilde{x}}\) is the training sample, \(\tilde{{Y}_{j}}\in \{0, 1\}\) is the class label of the training sample, \(\alpha_j\) is the Lagrange multiplier, and \(K(\boldsymbol{x}, \boldsymbol{\tilde{{x}}_{j}})\) is the kernel function. The sign of \(d(\boldsymbol{x})\) indicates which side of the hyperplane the test sample lies on: a positive value indicates the positive class, and a negative value indicates the negative class.

Extended from Proposition ~\ref{pro2.1}, the mean-uncertainty SVM method is developed in the following.
\begin{prop}\label{pro2.2}
Let Assumption~\ref{ass2.1} hold. Given the model parameters of the SVM, \(\boldsymbol{\beta} = (\boldsymbol{\beta_1'}, \beta_0)'\) and the sample \(\boldsymbol{x_i}\), the probability uncertainty of \(\{Y_i = 1\}\) can be described by the interval \([ \underline{P_i}, \overline{P_i} ]\):

\[
\underline{P_i} = \sigma(d(\boldsymbol{x_i}) + \underline{\mu}), \quad \overline{P_i} = \sigma(d(\boldsymbol{x_i}) + \overline{\mu}),
\]
where \(\sigma(x) = \frac{1}{1 + e^{-x}}\), and the form of \(d(\boldsymbol{x_i})\) is defined in Equation~\eqref{1}.
\end{prop}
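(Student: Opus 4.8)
The plan is to mirror, almost verbatim, the derivation of Proposition~\ref{pro2.1}, replacing the linear index $\boldsymbol{x_i'}\boldsymbol{\beta_1}+\beta_0$ by the SVM decision score $d(\boldsymbol{x_i})$ from Equation~\eqref{1}. Concretely, I would posit the latent-variable representation $Y_i^{\ast}=d(\boldsymbol{x_i})+\delta_i$, with the observable label satisfying $Y_i=1$ precisely when $Y_i^{\ast}\ge 0$, so that $\{Y_i=1\}=\{\delta_i\ge -d(\boldsymbol{x_i})\}$. Under Assumption~\ref{ass2.1} we decompose $\delta_i=\varepsilon_i+M_i$, where $M_i$ follows the maximal distribution $M_{[\underline{\mu},\overline{\mu}]}$ and $\varepsilon_i$ is standard logistic and nonlinearly independent of $M_i$. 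The first step is to note that, conditionally on the training sample and the Lagrange multipliers $\alpha_j$, the score $d(\boldsymbol{x_i})$ is a deterministic function of $\boldsymbol{x_i}$, hence it commutes with $\widehat{\mathbb{E}}[\cdot\mid\boldsymbol{x_i}]$ and plays exactly the role of the linear predictor in the LR analysis.

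The key computation is then the conditional probability for a fixed realization $M_i=\mu$ with $\mu\in[\underline{\mu},\overline{\mu}]$: writing $F$ for the standard logistic CDF and using $1-F(-t)=\sigma(t)$,
\[
P\bigl(Y_i=1\mid\boldsymbol{x_i},\,M_i=\mu\bigr)=P\bigl(\varepsilon_i\ge -d(\boldsymbol{x_i})-\mu\bigr)=1-F\bigl(-d(\boldsymbol{x_i})-\mu\bigr)=\sigma\bigl(d(\boldsymbol{x_i})+\mu\bigr).
\]
Because $\varepsilon_i$ is nonlinearly independent of $M_i$, letting $\mu$ range over $[\underline{\mu},\overline{\mu}]$ yields exactly the family of Bernoulli laws $\{\mathrm{Bernoulli}(\sigma(d(\boldsymbol{x_i})+\mu)):\mu\in[\underline{\mu},\overline{\mu}]\}$ that the SLE model attaches to $Y_i\mid\boldsymbol{x_i}$, so that $\underset{P\in\mathscr{P}_{\Theta}}{\sup}\,P(Y_i=1\mid\boldsymbol{x_i})=\underset{\mu\in[\underline{\mu},\overline{\mu}]}{\sup}\,\sigma(d(\boldsymbol{x_i})+\mu)$ and similarly for the infimum.

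To finish, I would invoke the strict monotonicity and continuity of $\sigma$: the map $\mu\mapsto\sigma(d(\boldsymbol{x_i})+\mu)$ is increasing, so its range over the compact interval $[\underline{\mu},\overline{\mu}]$ is precisely $[\sigma(d(\boldsymbol{x_i})+\underline{\mu}),\,\sigma(d(\boldsymbol{x_i})+\overline{\mu})]$; identifying the endpoints with $\underline{P_i}$ and $\overline{P_i}$ gives the claimed interval, which collapses to the ordinary SVM-induced probability when $\underline{\mu}=\overline{\mu}$. Everything downstream of the decomposition $\delta_i=\varepsilon_i+M_i$ is identical to the logistic case and purely a matter of monotonicity of the sigmoid; the only genuinely new point — and the one I expect to require the most care — is the modeling step of substituting the kernelized margin $d(\boldsymbol{x_i})$ for the linear index and checking its compatibility with the sublinear-expectation machinery (in particular that conditioning on $\boldsymbol{x_i}$ leaves only the $M_i$-uncertainty to optimize over). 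No inequality or convergence argument beyond what is already available in Proposition~\ref{pro2.1} is needed.
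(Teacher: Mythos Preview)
Your proposal is correct and follows essentially the same route as the paper: posit $Y_i=1\Leftrightarrow d(\boldsymbol{x_i})+\delta_i\ge 0$, decompose $\delta_i=\varepsilon_i+M_i$, integrate out the logistic noise $\varepsilon_i$ to obtain $\sigma(d(\boldsymbol{x_i})+M_i)$, and then use the maximal-distribution property of $M_i$ together with the monotonicity of $\sigma$ to identify the upper and lower probabilities. The paper's proof packages the last two steps as a single sublinear-expectation computation $\widehat{\mathbb{E}}[\sigma(d(\boldsymbol{x_i})+M_i)]=\sigma(d(\boldsymbol{x_i})+\overline{\mu})$, whereas you spell out the conditioning on $M_i=\mu$ and the optimization over $[\underline{\mu},\overline{\mu}]$ explicitly, but the argument is the same.
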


\begin{proof}
Let \(\mathscr{P}_{\Theta}\) be a set of probability uncertainty, and \(Y_i\) be a random variable taking values in \(\{0, 1\}\). Then \(\widehat{\mathbb{E}}[\cdot] = \sup\limits_{P \in \mathscr{P}_{\Theta}} E_P(\cdot)\) is a sublinear expectation. \(Y_i\) can be expressed as
\begin{equation*}
Y_i =
\begin{cases}
0, & \text{if } d(\boldsymbol{x_i}) < 0, \\
1, & \text{if } d(\boldsymbol{x_i}) \geq 0,
\end{cases}
\end{equation*}
according to Assumption~\ref{ass2.1}, \(\delta_i = \epsilon_i + M_i\), where \(M_i\) follows the maximum distribution \(M_{[\underline{\mu}, \overline{\mu}]}\), and \(\epsilon_i\) is nonlinearly independent of \(M_i\). Therefore, we have
\begin{align*}
\sup\limits_{P \in \mathscr{P}_{\Theta}} P(Y_i=1|X_i=\boldsymbol{x_i})
&= \sup\limits_{P \in \mathscr{P}_{\Theta}} P(d(\boldsymbol{x_i})+\delta_i \geq 0) \\
&= \sup\limits_{P \in \mathscr{P}_{\Theta}} E_P\left[\mathbf{1}_{\{d(\boldsymbol{x_i})+\delta_i \geq 0 \}}\right] \\
&= \widehat{\mathbb{E}}\left[\mathbf{1}_{\{d(\boldsymbol{x_i})+\delta_i \geq 0 \}}\right]  \\
&= \widehat{\mathbb{E}}\left[\mathbf{1}_{\{d(\boldsymbol{x_i})+\epsilon_i + M_i \geq 0 \}}\right]  \\
&= \widehat{\mathbb{E}}\left[\sigma(d(\boldsymbol{x_i})+ M_i \geq 0)\right]  \\
&= \sigma(d(\boldsymbol{x_i})+ \overline{\mu})  \\
&= \overline{P_i}.
\end{align*}
Similarly,
    \[
    \underline{P_i}=\inf_{P \in \mathscr{P}_{\Theta }} P(Y_i=1|X_i=\boldsymbol{x_i})
    = \sigma\bigl(d(\boldsymbol{x_i})+\underline{\mu}\bigr).
    \]
\end{proof}

According to Proposition ~\ref{pro2.2}, the classification procedure of the mean-uncertainty SVM method is as follows:
\begin{itemize}
\item step 1: Determine the approximate estimate of the error distribution parameter \( \overline{\mu} \) by solving a nonlinear equation:
\begin{equation}
\underset{1\le k\le n-N}{\mathop{\max }}\,\frac{1}{N}\sum\limits_{j=1}^{N}\Bigl[ Y_{k+j-1} - \sigma\Bigl(d(\boldsymbol{x_i})+\mu\Bigr) \Bigr]=0\Rightarrow \hat{\overline{\mu}}_N,
\end{equation}
where \( N \) represents the size of the sliding window.

\item step 2: Incorporate the approximate estimate \( \hat{\overline{\mu}}_N \) into the calculation of the upper probability, yielding the output of the class probability:

\begin{equation}\label{upper P for mean-uncertainty SVM}
\underset{P\in \mathscr{P}_{\Theta}}{\mathop{\sup }}\,P(Y_{i}=1|\boldsymbol{x_{i}})=\sigma\Bigl(d(\boldsymbol{x_i}) + \hat{\overline{\mu}}_N\Bigr),
\end{equation}

if the class probability is greater than 0.5, the sample is classified as the minority class, i.e., class "1"; otherwise, it is classified as the majority class, i.e., class "0".
\end{itemize}

\begin{rem}
 According to Proposition ~\ref{pro2.2}, in the mean-uncertainty SVM method, when $\overline{\mu }=\underline{\mu }$, we have $\overline{P_i}=\underline{P_i}$, then the probability of $Y_i=1$ degenerate to $\sigma\Bigl( d(\boldsymbol{x_i}) \Bigr)$, where $d(x_{i})$ corresponds to the traditional SVM prediction.
\end{rem}

% 第三章
\section{Data}\label{sec:Data}

% 3.1
\subsection{Data Selection}\label{subsec:data_selection}

For the prediction of short-term average return directions in China's high frequency futures trading market, in order to more comprehensively investigate the predictability, we consider to choose different categories of commodity futures data. Since the futures data with high liquidity help enhancing the effectiveness of investment strategies, we prioritize futures contracts with higher liquidity. As is known, the most active contracts are typically with high liquidity and actively traded in the futures market. Therefore, we mainly focus on the most active contracts for different categories of future products. Specifically, we utilize turnover as a key metric to assess the market liquidity, enabling the ranking of futures products based on their respective turnover volumes. According to the turnovers in September 2024, the top 15 futures products are selected as the experimental samples, including Gold (AU), Copper (CU), Tin (SN), Nickel (NI), Corn (C), Zinc (ZN), Bottle Chips (PR), Rebar (RB), Silver (AG), Aluminum (AL), Hogs (LH), Cotton (CF), Lead (PB), Corn Starch (CS), and Soybean Meal (M). The details are shown in Figure~\ref{fig:sample}. The selected dataset consist of trading data and limit order book data which are obtained from the CTP-API and collected every 0.5 seconds. The time span ranges from October 1st 2024 to October 31st 2024, covering a total of 18 trading days.

Given that futures markets often experience sharp fluctuations during market opening and closing periods, these abnormal data points may seriously affect the stability and accuracy of predictive models. To mitigate the noise resulting from short-term market sentiment fluctuations, Data of the five-minute pre-opening and pre-closing periods are excluded.

\begin{figure}[htbp]
  \centering
  \includegraphics[width=1.0\textwidth]{./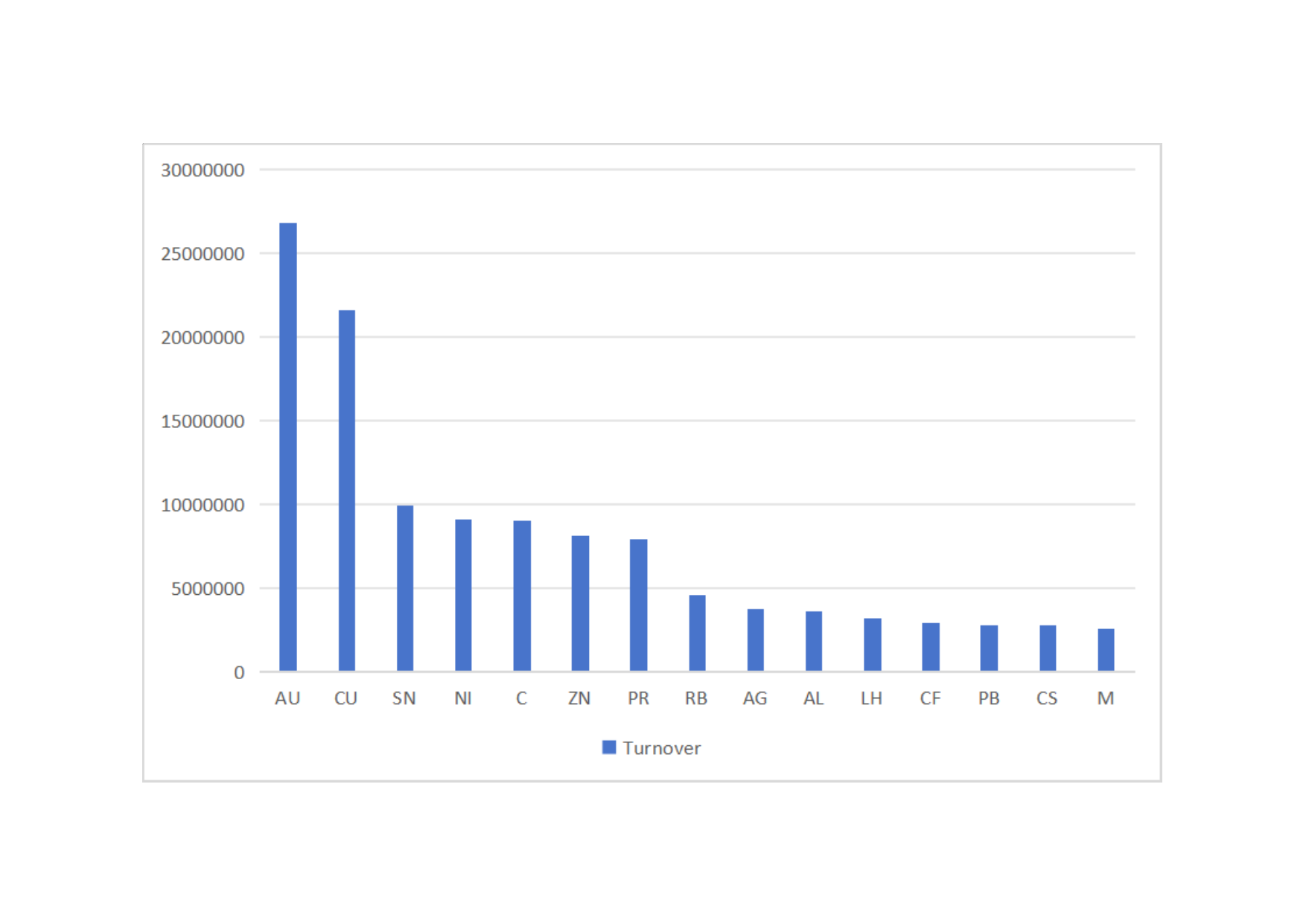} % 替换为你的图片文件名
  \caption{Top 15 future products by turnover volume in September 2024}
  \label{fig:sample}
\end{figure}

\subsection{Dependent and Independent Variables}

Based on the factors proposed by Fan et al. (2022)\cite{Fan2022how} to characterize the original data, we first select the variables applicable to China's market, then refine the original independent variables to ensure compatibility with China-specific time-sliced data structures.

\subsubsection{Dependent Variable}

We utilize the binary classification label indicating the direction of short-term average returns as the dependent variable in the model. Here we formulate two distinct binary classification tasks for the prediction: the first predicts "upward movement" versus "non-upward" states, while the second targets "downward movement" versus "non-downward" states.

To define the dependent variable, we first introduce the concept of calendar time. For the trade and limit order book data, each trade's timestamp is recorded as \( t \in \mathbb{R}^+ \). Let \( V_t \) denote the trading volume at time \( t \), where \( V_t = 0 \) indicates that no trade occurred at time \( t \). The calendar time interval between \( T_1 \) and \( T_2 \), where \( T_1, T_2 \in \mathbb{R} \), can be defined as the following half-open interval:

\begin{equation*}
    Int\left( {{T}_{1}},{{T}_{2}} \right)=\{t\in \mathbb{R}:{{T}_{1}}<t\le {{T}_{2}}\}.
\end{equation*}

Then the definition of the forward-looking interval is given in the following:
\begin{equation*}
    In{{t}^{forward}}\left( T,\Delta  \right)=Int(T,T+\Delta ).
\end{equation*}
where $\Delta$ denotes the time range over which the prediction occurs. The interval is defined as a set of consecutive time stamps that are used to define the dependent variable.

Let \( D^{{txn}} \) denote the set of all record timestamps \( t \in \mathbb{R}^{+} \) at which the trades occur, and \( D^{\text{qt}} \) represent its quote counterpart. Set \( D = {{D}^{txn}} \cup {{D}^{qt}} \) . The best bid and ask prices at time \( t \in D \) are denoted by \( P_{t}^{b} \) and \( P_{t}^{a} \), respectively, and the mid-price is the average of these two prices, i.e., \( P_{t} = \frac{P_{t}^{b} + P_{t}^{a}}{2} \). Let \( P_{t}^{txn} \) denote the transaction price at time \( t \in {{D}^{txn}} \). The best bid size and best ask size are denoted by \( S_{t}^{b} \) and \( S_{t}^{a} \).

In the following, the definition of the short-term average return is provided. For \( T \in D \) and a short time horizon \( \Delta \), the short-term average return is defined as:
\begin{equation*}
    \text{Return}(T,\Delta )=\text{Average}\left[ P_{t}^{txn}:t\in {{D}^{txn}}\cap In{{t}^{forward}}(T,\Delta ) \right]/{{P}_{T}}-1.
\end{equation*}

The above formula represents the average return over a very short time span \( \Delta \). Comparing with  the returns of individual trades or fixed-interval, short-term average returns exhibit lower volatility and reflects more accurately the overall trading behavior within a \( \Delta \)-time frame.

After calculating the short-term average returns, we annotate the dataset with categorical labels. Essentially, minor fluctuations with limited amplitude and short durations are treated as market noise under our framework, as they fail to capture authentic market dynamics. To facilitate robust investment decision-making based on the classification outcomes, only very significant upward(or downward) movement are considered as valid fluctuation signals in our prediction.
Therefore, we construct classification datasets based on the 95th percentile threshold of short-term average returns for the two distinct classification tasks: for the "upward movement" versus "non-upward"  prediction task, observations exceeding this critical value are labeled "1" (minority class), while others receive "0" (majority class); similarly, for the "downward movement" versus "non-downward" task, observations falling below the 5th percentile threshold are labeled "1" (minority class), while all others are designated "0" (majority class).

\subsubsection{Independent Variables (Predictors)}

Similar to the forward-looking interval $In{{t}^{forward}}\left( T,\Delta  \right)$ for the dependent variable, A lookback interval is constructed to build the independent variables (or predictors). For the calendar time with timestamp \( T \), the lookback span is denoted as \( (\Delta_1, \Delta_2) \), where \( \Delta_1 \leq \Delta_2 \). Then the following lookback interval is defined:

\begin{equation*}
    In{{t}^{back}}\left( T,{{\Delta }_{1}},{{\Delta }_{2}} \right)=Int(T-{{\Delta }_{2}},T-{{\Delta }_{1}}).
\end{equation*}

For each time stamp \(T\), we use four look-back windows measured in seconds: $\left( {{\Delta }_{1}},{{\Delta }_{2}} \right)\in \left\{ \left( 0,2.5 \right),\left( 2.5,6.5 \right),\left( 6.5,12.5 \right),(12.5,25) \right\}$. Then we consider eight primary independent variables, each of which is computed over the four separate time spans. We category the eight independent variables in three groups, as shown in the following:

\textbf{Group I: Volume and duration.} The first group of predictors are related to the trading intensity of the future.

(1) Total volume (Volumeall): The total number of positions transacted in a given time interval,
 \begin{equation*}
     \text{VolumeAll}(T,{{\Delta }_{1}},{{\Delta }_{2}})=\sum\limits_{t\in In{{t}^{back}}(T,{{\Delta }_{1}},{{\Delta }_{2}})}{V_{t}}.
 \end{equation*}

 (2) Maximum volume (Volumemax): The maximum volume of a single transaction within a certain time interval,
 \begin{equation*}
    \text{VolumeMax}(T,{{\Delta }_{1}},{{\Delta }_{2}})=\max \left\{ {{V}_{t}}:t\in In{{t}^{back}}(T,{{\Delta }_{1}},{{\Delta }_{2}}) \right\}.
\end{equation*}

\textbf{Group II: Returns and imbalances.} The second group of predictors are related to the recent trading asymmetry of the stock.

(3) Price change (Lambda): The change in price relative to total volume over a certain time interval,
\begin{equation*}
    \text{Lambda}(T,{{\Delta }_{1}},{{\Delta }_{2}})=\frac{{{P}_{\max (\text{I})}}-{{P}_{\min (\text{I})}}}{\text{VolumeAll}(T,{{\Delta }_{1}},{{\Delta }_{2}})},
\end{equation*}
Where \(I={{D}^{txn}}\cap In{{t}^{back}}(T,\Delta_{1},\Delta_{2})=\{t \in \mathbb{R} : T-\Delta_{2} < t \le T-\Delta_{1}\}\), \({{P}_{\min (I)}}\) denotes the mid-price at the earliest time in interval \(I\), and \({{P}_{\max (I)}}\) denotes the mid-price at the latest time in interval \(I\).

(4) Quote Imbalance (LobImbalance): The average degree of imbalance in the depth of the Limit Order Book (LOB) over a certain time interval,
\begin{equation*}
    \text{LobImbalance}(T,{{\Delta }_{1}},{{\Delta }_{2}})=\text{Average}\left[ \frac{S_{t}^{a}-S_{t}^{b}}{S_{t}^{a}+S_{t}^{b}}:t\in In{{t}^{back}}(T,{{\Delta }_{1}},{{\Delta }_{2}}) \right],
\end{equation*}
Where $S_{t}^{b}$ is the quantity bought at the best bid price and $S_{t}^{a}$ is the quantity sold at the best ask price.

(5) Turnover Imbalance (TxnImbalance): A measure of the asymmetry between the buy and sell volumes in recent transactions,
\begin{equation*}
    \text{TxnImbalance}(T,{{\Delta }_{1}},{{\Delta }_{2}})=\frac{\sum\limits_{t\in {{D}^{txn}}\cap In{{t}^{back}}(T,{{\Delta }_{1}},{{\Delta }_{2}},M)}{({{V}_{t}}\cdot \text{Dir}_{\text{t}}^{\text{LR}})}}{\text{VolumeAll}(T,{{\Delta }_{1}},{{\Delta }_{2}})},
\end{equation*}
Where $Di{{r}^{LR}}$ is derived from the Lee-Ready \cite{lee1991inferring} algorithm, which infers the direction of a trade from a sequence of trades.

(6) Historical Return (PastReturn): The return within a certain time interval,
\begin{equation*}
    \text{PastReturn}(T,{{\Delta }_{1}},{{\Delta }_{2}})=1-\text{Average}\left[ P_{t}^{txn}:t\in I \right]/{{P}_{\max (I)}}.
\end{equation*}

\textbf{Group III: Speed and cost.} The third group of predictors focus on the speed and cost of futures trading.

(7) TurnOver: The speed of transactions in relation to the total number of positions,
\begin{equation*}
    \text{TurnOver}(T,{{\Delta }_{1}},{{\Delta }_{2}})=\frac{\text{VolumeAll}(T,{{\Delta }_{1}},{{\Delta }_{2}})}{S},
\end{equation*}
where $S$ refers to the total number of positions.

(8) Quoted Spread (QuotedSpread): The average proportional nominal spread of quotes over a certain time interval,
\begin{equation*}
    \text{QuotedSpread}(T,{{\Delta }_{1}},{{\Delta }_{2}})=\text{Average}\left[ \frac{P_{t}^{a}-P_{t}^{b}}{{{P}_{t}}}:t\in In{{t}^{back}}(T,{{\Delta }_{1}},{{\Delta }_{2}}) \right].
\end{equation*}

Finally, the eight factors were realized in each of the four look-back intervals. Therefore 32 factors are obtained and then used as inputs in the classification model.

\subsection{Training and Testing}

The model training and testing are performed with a rolling window of 3 trading days, where the data of the first 2 trading days is used for the training, and that of the 3rd  trading day is used for the testing, then the window rolls forward by 1 trading day to repeat the process. And the parameter ${{\hat{\overline{\mu }}}_{N}}$ in ~\eqref{upper P for mean-uncertainty LR} and ~\eqref{upper P for mean-uncertainty SVM} for the two mean-uncertainty methods are calculated during the training process. The datails for the training and test are illustrated in Figure~\ref{fig:sample}.

\begin{figure}[htbp]
  \centering
  \includegraphics[width=0.8\textwidth]{./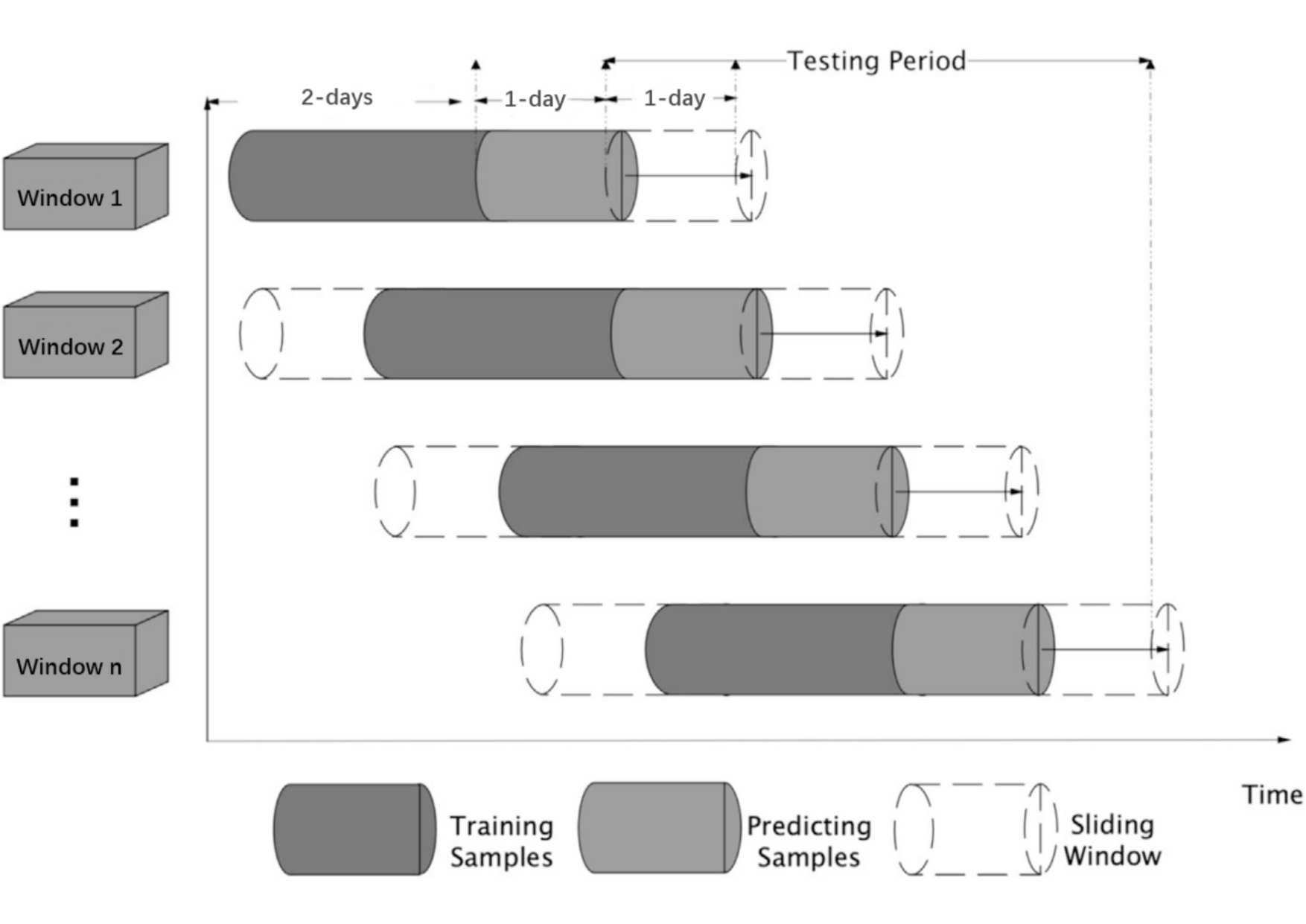} % 替换为你的图片文件名
  \caption{Training and testing sliding window}
  \label{fig:sample}
\end{figure}

    In each trading day, we calculate the independent and dependent variables presented in Section 3.2, then label the dataset using the calculated dependent variable results. Moreover, a rolling time window with a duration of 30 seconds is employed within each trading day, where the data of the first 25 seconds is used to compute the independent variables, and that of the subsequent 5 second are used to calculate the dependent variable. The window rolls every 5 seconds to repeat the process.

\subsection{Overall Framework}
We show in Figure~ \ref{FIGhtmltext} the overall framework for the prediction of short-term average return directions with both the mean-uncertainty LR and the mean-uncertainty SVM methods. Moreover, we compare the prediction results with several conventional LR-related and SVM-related machine learning models (including the LR, SMOTE LR, RUS LR, SVM, SMOTE SVM, and RUS SVM methods) as shown in step 3 and 4 of Figure~\ref{FIGhtmltext}, and the details for the comparison are presented in Section ~\ref{sec:empirical_results}.
\begin{figure}[htbp]
\centering
\includegraphics[width=1.1\textwidth]{./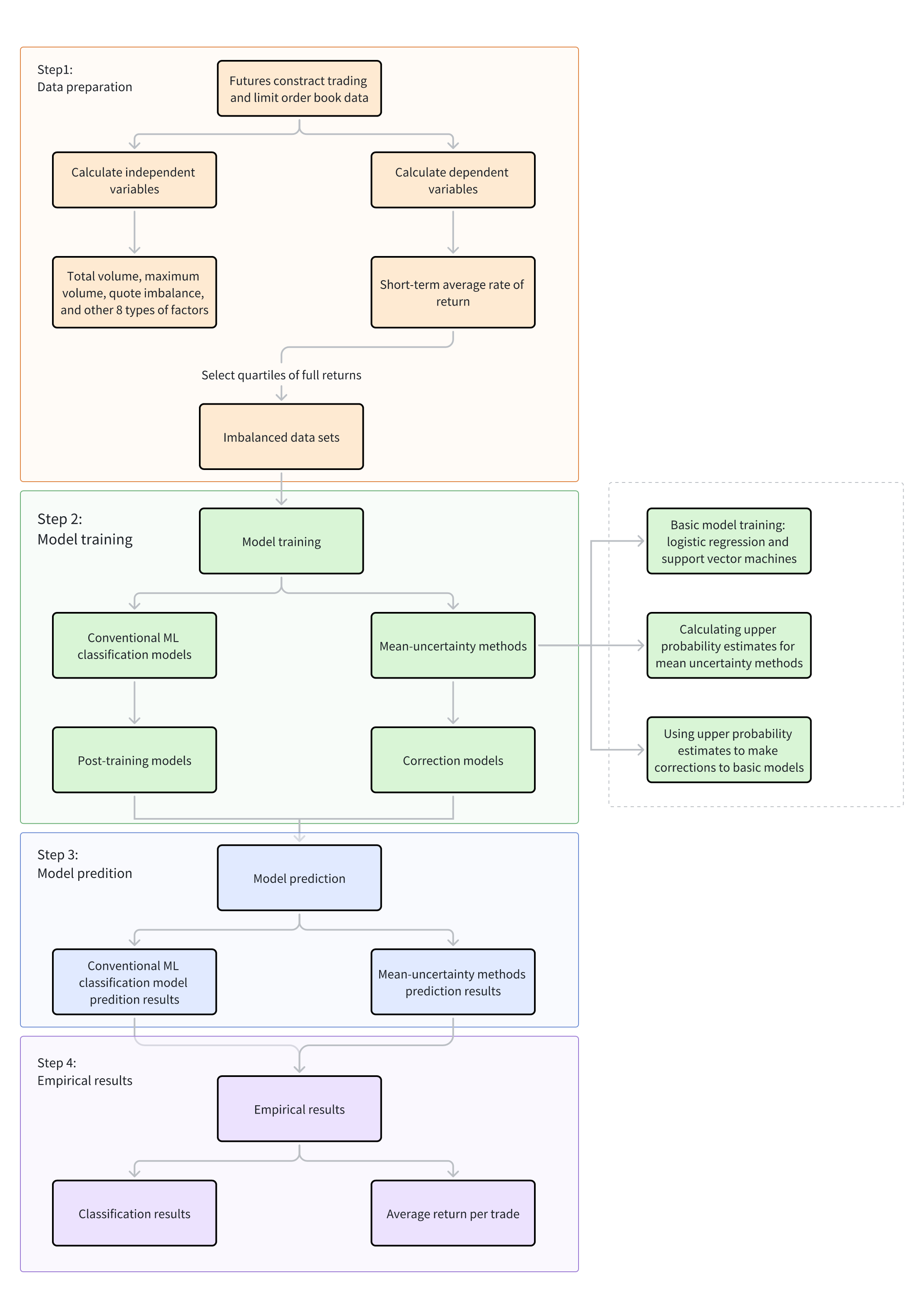}
\caption{Overall approach}
\label{FIGhtmltext}
\end{figure}

% 第四章
\section{Empirical Results}\label{sec:empirical_results}
\subsection{Classification Metrics}\label{subsec:Classification Metrics}

In order to assess the model performance of the prediction, the confusion matrix is first given in Table ~\ref{tab:diagbox}, then three key evaluation metrics suitable for imbalanced data classification are employed: Recall, Balanced Accuracy, and the $\beta$-variable $F$-score.

\begin{table}[htbp]
 \centering
  \caption{Confusion matrix}
  \label{tab:diagbox}
  \begin{tabular}{lcc}
    \toprule
    \diagbox{Actual}{Predictive} & Negative & Positive \\
    \midrule
    Negative & TN (True Negative) & FP (False Positive) \\
    Positive & FN (False Negative) & TP (True Positive) \\
    \bottomrule
  \end{tabular}
\end{table}

(1) Recall: The accuracy of a positive sample, i.e., the proportion of samples in which the actual sample category is positive and the prediction is also positive, defined as:
\begin{equation*}
    Recall=\frac{TP}{TP+FN}.
\end{equation*}

(2) Balance Accuracy: The average accuracy of the classification for two types of samples:
\begin{equation*}
    Bacc=\frac{1}{2}(\frac{TP}{TP+FN}+\frac{TN}{TN+FP}),
\end{equation*}
where $\frac{TP}{TP+FN}$($Recall$) is the accuracy of the positive samples and $\frac{TN}{TN+FP}$ is the accuracy of the negative samples.

(3) $\beta $-Varied $F$-Measure:
\begin{equation*}
    {{F}_{\beta }}=\frac{(1+{{\beta }^{2}})\cdot Recall\cdot Precision}{Recall+{{\beta }^{2}}\cdot Precision},
\end{equation*}
where $Precision$ refers to the proportion of the actual positive samples among the samples predicted as positive,
\begin{equation*}
    Precision = \frac{TP}{TP+FP}
\end{equation*}

  Here \( TP \) and \( FP \) represents respectively true positives and false positives.
  The coefficient \( \beta \) in $\beta $-Varied $F$-Measure is used to adjust the relative importance between Precision and Recall. In the context of imbalanced data, \( \beta \) is typically set as \( \ln \frac{m_{-}}{m_{+}} \), where \( m_{-} \) and \( m_{+} \) denote the number of negative and positive samples, respectively.

\subsection{Classification Results}

Based on the high-frequency data from China's futures market, we employ the two mean-uncertainty classification methods, mean-uncertainty LR and mean-uncertainty SVM, to predict the direction of short-term average returns for the futures contracts. For the implementation of the mean-uncertainty SVM method, we utilize the Gaussian kernel as the kernel function $K(\boldsymbol{x}, \boldsymbol{\tilde{{x}}_{j}})$ in ~\eqref{1}, such as \(K(\boldsymbol{x}, \boldsymbol{\tilde{{x}}_{j}}) = \exp \left( -\gamma \left\| \boldsymbol{x} - \boldsymbol{\tilde{{x}}_{j}} \right\|^2 \right),\) where the super parameter $\gamma$ can be determined based on the number of features and the variance of the training data. By using the metrics presented in Section ~\ref{subsec:Classification Metrics}, the classification performance of the two mean-uncertainty methods is then compared with that of several conventional LR-related and SVM-related models, including the LR, SMOTE-LR, RUS-LR, SVM, SMOTE-SVM, and RUS-SVM methods.

Without loss of generality, we illustrate the classification results for the most active contracts for Gold (AU), Tin (SN), Zinc (ZN), and Rebar (RB), all of which rank among the top 15 futures products by market liquidity. The prediction task focuses on the direction of "upward" or "non-upward" over the period from October 1 to October 30, 2024, encompassing a total of 18 trading days.

We first show in Table~\ref{tab:lr_eval} the comparison results between the mean-uncertainty LR method and another
three conventional LR-related machine learning methods, including the LR, SMOTE-LR and RUS-LR methods. From Table~\ref{tab:lr_eval}, we can see that the mean-uncertainty LR method outperforms other methods across all the three metrics, with a particularly notable advantage in Recall. It also achieves the highest Balanced Accuracy, indicating better sensitivity to minority class identification. In contrast, the basic LR method produces near-zero Recall on the testing sets, which demonstrate its ineffectivity at detecting minority class samples. Although the SMOTE-LR and RUS-LR demonstrate some improvements Comparing with the LR, the classification performance are still inferior to that of the mean-uncertainty LR method.

\begin{table}[H]
  \centering
  \caption{Classification results of LR-related models}
  \label{tab:lr_eval}
  \begin{tabular}{llcccc}
    \toprule
   Varieties & Metrics & LR & SMOTE-LR & RUS-LR & Mean-uncertainty LR \\
    \midrule
    \multirow{3}{*}{\begin{tabular}{c} AU \end{tabular}}
      & Bacc      & 0.504 & 0.576 & 0.503 & \textbf{0.577} \\
      & F-Measure & 0.010 & 0.331 & 0.266 & \textbf{0.348} \\
      & Recall    & 0.009 & 0.596 & 0.505 & \textbf{0.697} \\
    \midrule
    \multirow{3}{*}{\begin{tabular}{c} SN \end{tabular}}
      & Bacc      & 0.500 & 0.550 & 0.574 & \textbf{0.609} \\
      & F-Measure & 0.000 & 0.281 & 0.315 & \textbf{0.354} \\
      & Recall    & 0.000 & 0.459 & 0.532 & \textbf{0.615} \\
    \midrule
    \multirow{3}{*}{\begin{tabular}{c} ZN \end{tabular}}
      & Bacc      & 0.500 & 0.577 & 0.575 & \textbf{0.622} \\
      & F-Measure & 0.000 & 0.316 & 0.317 & \textbf{0.364} \\
      & Recall    & 0.000 & 0.529 & 0.539 & \textbf{0.629} \\
    \midrule
    \multirow{3}{*}{\begin{tabular}{c} RB \end{tabular}}
      & Bacc      & 0.504 & 0.608 & 0.609 & \textbf{0.640} \\
      & F-Measure & 0.010 & 0.349 & 0.357 & \textbf{0.368} \\
      & Recall    & 0.009 & 0.557 & 0.600 & \textbf{0.610} \\
    \bottomrule
  \end{tabular}
\end{table}

\begin{table}[H]
  \centering
  \caption{Classification results of SVM-related models }
  \label{tab:svm_eval}
  \begin{tabular}{llcccc}
    \toprule
     Varieties & Metrics & SVM & SMOTE-SVM & RUS-SVM & Mean-uncertainty SVM \\
    \midrule
    \multirow{3}{*}{\begin{tabular}{c} AU \end{tabular}}
      & Bacc      & 0.500 & 0.551 & 0.494 & \textbf{0.571} \\
      & F-Measure & 0.000 & 0.228 & 0.228 & \textbf{0.361} \\
      & Recall    & 0.000 & 0.294 & 0.385 & \textbf{0.817} \\
    \midrule
    \multirow{3}{*}{\begin{tabular}{c} SN \end{tabular}}
      & Bacc      & 0.500 & 0.526 & \textbf{0.537} & 0.523 \\
      & F-Measure & 0.000 & 0.190 & 0.228 & \textbf{0.370} \\
      & Recall    & 0.000 & 0.245 & 0.318 & \textbf{0.866} \\
    \midrule
    \multirow{3}{*}{\begin{tabular}{c} ZN \end{tabular}}
      & Bacc      & 0.500 & 0.522 & 0.582 & \textbf{0.595} \\
      & F-Measure & 0.000 & 0.134 & 0.286 & \textbf{0.314} \\
      & Recall    & 0.000 & 0.170 & 0.604 & \textbf{0.738} \\
    \midrule
    \multirow{3}{*}{\begin{tabular}{c} RB \end{tabular}}
      & Bacc      & 0.500 & 0.518 & \textbf{0.551} & 0.538 \\
      & F-Measure & 0.000 & 0.135 & 0.252 & \textbf{0.325} \\
      & Recall    & 0.000 & 0.172 & 0.469 & \textbf{0.679} \\
    \bottomrule
  \end{tabular}
\end{table}

Table~\ref{tab:svm_eval} shows the comparison results between the mean-uncertainty SVM method and another
three conventional SVM-related machine learning methods, including the SVM, SMOTE-SVM and RUS-SVM methods. We can see that the basic SVM method can hardly identify minority class samples, highlighting its inefficiency in handling imbalanced data. In contrast, the mean-uncertainty SVM method outperforms the others in both the $F$-Measure and Recall. Particularly, it shows a significant improvement in Recall over all the three conventional SVM-related methods. This demonstrates that mean-uncertainty SVM is more effective in recognizing minority class samples. Although SMOTE-SVM and RUS-SVM exhibit improvements relative to the basic SVM, their overall performance on the testing set remains inferior to that of the mean-uncertainty SVM.

\subsection{Investment Strategy and the Average Return per Trade}

To assess the practical effectiveness of the mean-uncertainty methods in China's high-frequency futures market, we develop investment strategies based on the prediction results from the two mean-uncertainty classification models (mean-uncertainty LR and mean-uncertainty SVM). Specifically, for the two distinct prediction tasks, the "upward/non-upward" prediction corresponds to a "long/not long" strategy, while the "downward/non-downward" prediction corresponds to a "short/not short" strategy. Then we compute the average return per trade for each method and compare the computation results.

Taking the "long/not long" strategy (corresponding to "upward/non-upward" prediction) as an example, the cumulative return  over 18 trading days are first calculated:

\begin{equation}\label{eq:Cumulative Return}
    \text{Cumulative Return}=\prod\limits_{i=1}^{n}{\left( 1+{{r}_{i}} \right)-1},
\end{equation}
where \( n \) represents the number of minority class samples predicted over 18 trading days, i.e., the number of trades. \( r_{i} \) denotes the short-term average return corresponding to the \( i \)-th sample predicted to be in the minority class, with the calculation of short-term average return presented in Section 3.2.1. The average return per trade is then calculated as:
\begin{equation}\label{eq:Average Return per Trade}
 \text{Average Return per Trade}=\frac{ \text{Cumulative Return}}{n}.
\end{equation}

As for the "short/not short" strategy, the minority class samples result in negative short-term average returns, which means that we can earn profits by short-selling these instances. In contrast to the long strategy, the short strategy profits from a decline in the underlying asset's price. Consequently, the cumulative return is calculated as the negative of the total return, then the average return per trade is determined by ~\eqref{eq:Average Return per Trade}.

After predicting the short-term average return directions across the top 15 futures contracts, the average returns per trade from the respective trading strategies are then summarized. The comparison results for the "upward/non-upward" (long/not long) predictions among different models are presented in Table~\ref{tab:table44}, and those for the "downward/non-downward" (short/not short) predictions are illustrated in Table~\ref{tab:table45}.

As is seen from Table~\ref{tab:table44}, for the "upward/non-upward" prediction, both the mean-uncertainty LR and the mean-uncertainty SVM methods outperform the other conventional methods in terms of average returns per trade across most futures contracts. Although the RUS or SMOTE based method yields some improvements comparing with the basic LR and SVM method, their performance remains inferior to that of the mean-uncertainty methods. Similar with the results in Table~\ref{tab:table44}, Table~\ref{tab:table45} illustrate that for the "downward/non-downward" prediction, the mean-uncertainty methods continuing to exhibit superior performance for the majority of contracts.

Overall, from the perspective of returns for the corresponding investment strategy, compared to the other conventional classification models in Tables ~\ref{tab:table44} and ~\ref{tab:table45}: for the "upward/non-upward" scenario, both the mean-uncertainty LR and the mean-uncertainty SVM methods achieve higher average returns per trade in 80\% of the futures contracts; for the "downward/non-downward" scenario, the mean-uncertainty LR method outperforms in 80\% of the LR-related futures contracts, and the mean-uncertainty SVM method performs better in 67\% of the SVM-related contracts. These results indicate that the mean-uncertainty methods demonstrate stronger advantages and more accurate predictive capabilities when addressing the data imbalance in the direction prediction of high-frequency futures data, which is more helpful for investment decisions.

\begin{landscape} % 开始横向页面环境
\vspace*{\fill} % 页面上部填充
  \begin{table}[htbp]
    \centering
    \caption{Average returns per trade on the test set for "upward/non-upward" predictions}
    \label{tab:table44}
    \begin{tabular}{lcccccccc}
      \toprule
      Varieties & \multicolumn{4}{c}{LR-related classification models} & \multicolumn{4}{c}{SVM-related classification models} \\
           & LR & SMOTE-LR & RUS-LR & Mean-uncertainty LR & SVM & SMOTE-SVM & RUS-SVM & Mean-uncertainty SVM \\
      \midrule
      AU   & -3.23e-6 & 1.26e-6 & 9.55e-8 & \textbf{1.31e-6} & 0      & 1.15e-6 & 1.21e-6 & \textbf{3.63e-6} \\
      CU     & -1.76e-5 & 1.89e-5 & 1.31e-5 & \textbf{2.25e-5} & 0      & 1.58e-5 & 1.36e-5 & \textbf{2.83e-5} \\
      SN     & -3.45e-5 & 1.11e-5 & 7.97e-6 & \textbf{1.14e-5} & 0      & 6.72e-6 & 6.15e-6 & \textbf{1.56e-5} \\
      NI     & 2.95e-6  & 6.98e-6 & 3.74e-6 & \textbf{8.58e-6} & 0      & 2.81e-6 & 2.09e-6 & \textbf{3.75e-6} \\
      C   & 1.02e-4  & 1.07e-4 & 8.22e-5 & \textbf{1.15e-4} & 0      & 7.48e-5 & 8.49e-5 & \textbf{1.17e-4} \\
      ZN     & 5.93e-6  & 3.45e-5 & 2.80e-6 & \textbf{3.68e-5} & 0      & 2.33e-5 & 2.35e-5 & \textbf{4.00e-5} \\
      PR   & 4.14e-5  & \textbf{5.17e-5} & 2.66e-5 & 4.76e-5 & 0      & 3.55e-5 & 3.02e-5 & \textbf{4.86e-5} \\
      RB & 3.33e-5  & 6.04e-5 & 4.61e-5 & \textbf{6.14e-5} & 0      & 3.21e-5 & 4.48e-5 & \textbf{4.76e-5} \\
      AG     & 7.65e-6  & \textbf{1.04e-5} & 7.86e-6 & 9.47e-6 & 0      & 3.53e-6 & 9.22e-6 & \textbf{1.45e-5} \\
      AL     & 4.28e-5  & 6.44e-5 & 5.59e-5 & \textbf{6.63e-5} & 0      & 4.81e-5 & 5.51e-5 & \textbf{6.56e-5} \\
      LH   & 1.26e-5  & 7.56e-5 & 5.62e-5 & \textbf{7.90e-5} & 0      & 4.97e-5 & \textbf{5.54e-5} & 5.53e-5 \\
      CF   & 7.91e-5  & 8.54e-5 & 7.01e-5 & \textbf{8.95e-5} & 0      & 5.73e-5 & 6.87e-5 & \textbf{8.34e-5} \\
      PB     & -4.82e-5 & 9.11e-5 & 6.95e-5 & \textbf{9.64e-5} & 0      & 6.05e-5 & \textbf{7.93e-5} & 5.68e-5 \\
      CS & 1.48e-5 & 1.14e-4 & 8.08e-5 & \textbf{1.14e-4} & 0      & 9.20e-5 & 7.91e-5 & \textbf{1.23e-4} \\
      M   & 3.34e-5  & \textbf{9.17e-5} & 7.97e-5 & 8.57e-5 & 0      & 6.34e-5 & \textbf{7.14e-5} & 5.64e-5 \\
      \bottomrule
    \end{tabular}
  \end{table}
  \vspace*{\fill} % 页面下部填充
\end{landscape} % 结束横向页面环境

\begin{landscape}
\vspace*{\fill} % 页面上部填充
\begin{table}[htbp]
  \centering
  \caption{Average returns per trade on the test set for "downward/Non-downward" predictions}
  \label{tab:table45}
  \begin{tabular}{lcccccccc}
    \toprule
    Varieties & \multicolumn{4}{c}{LR-related classification models} & \multicolumn{4}{c}{SVM-related classification models} \\
         & LR & SMOTE-LR & RUS-LR & Mean-uncertainty LR & SVM & SMOTE-SVM & RUS-SVM & Mean-uncertainty SVM \\
    \midrule
    AU   & -2.34e-05 & \textbf{2.05e-06} & 7.58e-07 & 1.30e-06 & 0 & 1.07e-06 & 7.62e-07 & \textbf{2.68e-06} \\
    CU     & 3.30e-06  & 2.39e-05 & 2.03e-05 & \textbf{2.62e-05} & 0 & 2.27e-05 & 1.94e-05 & \textbf{3.38e-05} \\
    SN     & 1.94e-06  & 1.61e-05 & 1.52e-05 & \textbf{1.86e-05} & 0 & 1.11e-05 & 1.26e-05 & \textbf{3.09e-05} \\
    NI     & -9.81e-05 & 8.10e-06 & 7.05e-06 & \textbf{8.30e-06} & 0 & \textbf{8.46e-06} & 6.70e-06 & 7.24e-06 \\
    C   & 1.50e-05  & 5.15e-05 & 4.53e-05 & \textbf{5.68e-05} & 0 & 5.18e-05 & 4.17e-05 & \textbf{6.21e-05} \\
    ZN     & 5.60e-06  & \textbf{2.46e-05} & 2.17e-05 & 2.41e-05 & 0 & 1.70e-05 & 1.69e-05 & \textbf{2.45e-05} \\
    PR   & 1.91e-05  & 4.28e-05 & 3.31e-05 & \textbf{4.55e-05} & 0 & \textbf{3.52e-05} & 3.19e-05 & 2.34e-05 \\
    RB & 6.89e-07  & 3.00e-05 & 2.68e-05 & \textbf{3.01e-05} & 0 & 2.08e-05 & \textbf{2.18e-05} & 1.16e-05 \\
    AG     & 7.82e-07  & \textbf{7.48e-06} & 5.42e-06 & 6.63e-06 & 0 & 5.76e-06 & 6.61e-06 & \textbf{1.81e-05} \\
    AL     & 9.42e-06  & 4.27e-05 & 3.89e-05 & \textbf{4.79e-05} & 0 & \textbf{4.40e-05} & 3.78e-05 & 3.98e-05 \\
    LH   & -6.05e-06 & 4.42e-05 & 3.57e-05 & \textbf{4.67e-05} & 0 & 4.44e-05 & 3.51e-05 & \textbf{5.78e-05} \\
    CF   & 9.55e-06  & 4.75e-05 & 4.29e-05 & \textbf{4.84e-05} & 0 & \textbf{4.93e-05} & 3.83e-05 & 4.56e-05 \\
    PB     & 4.54e-05  & 5.82e-05 & 5.08e-05 & \textbf{6.46e-05} & 0 & 5.96e-05 & 4.97e-05 & \textbf{7.30e-05} \\
    CS & 1.14e-05 & 4.33e-05 & 5.14e-05 & \textbf{5.46e-05} & 0 & 5.49e-05 & 4.16e-05 & \textbf{6.80e-05} \\
    M   & 5.71e-06  & 4.40e-05 & 3.84e-05 & \textbf{4.71e-05} & 0 & 4.35e-05 & 3.54e-05 & \textbf{5.37e-05} \\
    \bottomrule
  \end{tabular}
\end{table}
\vspace*{\fill} % 页面下部填充
\end{landscape}

\section{Conclusion}

This paper investigates the predictability of short-term average return directions using high-frequency data from China's futures market, and facilitate corresponding investment decision-making. Addressing the issue of market data imbalance, we extend the mean-uncertainty LR method under the SLE framework and develop the mean-uncertainty SVM method. Employing both the mean-uncertainty methodologies, we first implement binary classification and develop corresponding investment strategies, then perform empirical analysis on the top 15 liquid products among the most active contracts in China's future market. Compared with traditional classification methods, both the mean-uncertainty classifier achieves higher average returns per trade on 80\% of futures products in "upward vs. non-upward" scenarios. For "downward vs. non-downward" predictions, the mean-uncertainty LR method delivers superior returns on 80\% of products among LR-related methods, and the mean-uncertainty SVM outperforms on 67\% of products among SVM-related methods.

\appendix

   \section{Preliminaries of Sub Linear Expectation (SLE)}\label{appendix:SLE}
   In the appendix section, we introduce some useful definitions and theorems of SLE, which are the basis of the mean-uncertainty LR and mean-uncertainty SVM classification method. More theoretical results and proofs for sublinear expectation can be referred to \cite{peng2010nonlinear}.

Consider a sublinear expectation space  $(\Omega, \mathcal{H}, {\mathbb{E}})$, where $\Omega$ is a given non-empty set space, $\mathcal{H}$ is a space of real-valued functions(or a space of random variables) defined on $\Omega$. Without loss of generality, assume that the space of random variables $\mathcal{H}$ satisfies the following condition: if $X \in \mathcal{H}$, then for any function $\varphi \in \mathcal{C}_{\text{Lip}}(\mathbb{R}^d)$, we have $\varphi(X) \in \mathcal{H}$. Here, $\mathcal{C}_{\text{Lip}}(\mathbb{R}^d)$ denotes the linear space of all real-valued Lipschitz continuous functions defined on $\mathbb{R}^d$. When these functions are additionally bounded, they form the space $\mathcal{C}_{b.\text{Lip}}(\mathbb{R}^d)$. $\widehat{\mathbb{E}}$ is a sublinear expectation on $(\Omega, \mathcal{H}, \widehat{\mathbb{E}})$ which is defined as follows:

\begin{defn}\label{def2.1}
(Sublinear expectation) The sublinear expectation $\mathcal{H}$ is a functional mapping $\widehat{\mathbb{E}}: \mathcal{H} \to \mathbb{R}$ that satisfies the following properties for any given $X, Y \in \mathcal{H}$:

(1) Monotonicity: If $X(\omega) \ge Y(\omega)$ for all $\omega \in \Omega$, then $\widehat{\mathbb{E}}[X] \ge \widehat{\mathbb{E}}[Y]$;

(2) Constant preserving: $\widehat{\mathbb{E}}[c] = c$, $\forall c \in \mathbb{R}$;

(3) Sub-additivity: $\widehat{\mathbb{E}}[X + Y] \le \widehat{\mathbb{E}}[X] + \widehat{\mathbb{E}}[Y]$;

(4) Positive homogeneity: $\widehat{\mathbb{E}}[\lambda X] = \lambda \widehat{\mathbb{E}}[X], \quad \forall \lambda \ge 0$.

And $(\Omega, \mathcal{H}, \widehat{\mathbb{E}})$ is called a sublinear expectation space.
   \end{defn}

The sublinear expectation space encompasses all classical linear distributions, as well as a wide range of nonlinear distributions. Similar to the concepts in classical probability spaces, the maximum distribution can also be defined within the sublinear expectation space.

\begin{defn}\label{def2.2}
(Maximal distribution) A \(d\)-dimensional random variable \(\xi\) in sublinear space \(\left( \Omega, \mathcal{H}, \widehat{\mathbb{E}} \right)\), is maximally distributed if there exists a bounded, closed and convex subset $ \overline{\Theta}$ in $R^{d}$ such that

\[
\widehat{\mathbb{E}}[\varphi(\xi)] = \underset{v \in \overline{\Theta}}{\max} \, \varphi(v), \quad \varphi \in \mathcal{C}_{Lip}({\mathbb{R}}^d),
\]
which is denoted as \(\xi \overset{d}{\mathop{=}}\,M({\overline{\Theta}})\). If \(\overline{\Theta}\) is a non-convex set, it is referred to as the non-convex maximum distribution.
\end{defn}

Sublinear expectation plays a crucial role in the nonlinear framework, and sublinear expectations are equivalent to taking expectations about a family of linear expectations. Therefore, the sublinear expectation $\widehat{\mathbb{E}}$ can be used to equivalently characterize a family of uncertain probabilities ${{\left\{ {{P}_{\theta }} \right\}}_{\theta \in \Theta }}$.

\begin{thm}\label{the2.1}
Let \(\widehat{\mathbb{E}}\) be a sublinear expectation defined on \(\left( \Omega, \mathcal{H} \right)\), $\boldsymbol{X}=\left( {{X}_{1}},{{X}_{2}},\cdots ,{{X}_{n}} \right)$. Then, there exists a family of linear expectations \(\left\{ E_{\theta} : \theta \in \Theta \right\}\) defined on \(\left( \Omega, \mathcal{H} \right)\) such that:

\[
\widehat{\mathbb{E}}[X] = \underset{\theta \in \Theta}{\max} E_{\theta}[X].
\]
\end{thm}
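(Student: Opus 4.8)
The plan is to read Theorem~\ref{the2.1} as a Hahn--Banach type representation result. By Definition~\ref{def2.1} the functional $\widehat{\mathbb{E}}$ is a \emph{sublinear functional} on the real vector space $\mathcal{H}$: it is subadditive by property~(3) and positively homogeneous by property~(4). The assertion is precisely that such a functional equals the pointwise supremum of the family of all linear functionals it dominates, and that this supremum is attained at each point.

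First I would fix an arbitrary $X \in \mathcal{H}$ and construct a single linear expectation $E_X$ with $E_X \le \widehat{\mathbb{E}}$ on all of $\mathcal{H}$ and $E_X[X] = \widehat{\mathbb{E}}[X]$. On the one-dimensional subspace $\{\lambda X : \lambda \in \mathbb{R}\}$ define $L(\lambda X) := \lambda\,\widehat{\mathbb{E}}[X]$. For $\lambda \ge 0$ positive homogeneity gives $L(\lambda X) = \widehat{\mathbb{E}}[\lambda X]$; for $\lambda < 0$ write $\lambda = -\mu$ with $\mu > 0$ and note that subadditivity yields $\widehat{\mathbb{E}}[\lambda X] + \widehat{\mathbb{E}}[\mu X] \ge \widehat{\mathbb{E}}[0] = 0$, so $L(\lambda X) = -\mu\,\widehat{\mathbb{E}}[X] = -\widehat{\mathbb{E}}[\mu X] \le \widehat{\mathbb{E}}[\lambda X]$. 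Hence $L$ is dominated by $\widehat{\mathbb{E}}$ on this subspace, and the Hahn--Banach theorem extends $L$ to a linear functional $E_X$ on $\mathcal{H}$ with $E_X \le \widehat{\mathbb{E}}$ everywhere, still satisfying $E_X[X] = \widehat{\mathbb{E}}[X]$.

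Next I would verify that \emph{any} linear functional $E$ dominated by $\widehat{\mathbb{E}}$ is a genuine linear expectation. Constant preservation follows from $E[c] \le \widehat{\mathbb{E}}[c] = c$ together with $-E[c] = E[-c] \le \widehat{\mathbb{E}}[-c] = -c$, giving $E[c] = c$. Monotonicity follows because $X \le Y$ forces $X - Y \le 0$, hence $\widehat{\mathbb{E}}[X-Y] \le \widehat{\mathbb{E}}[0] = 0$ by monotonicity and constant preservation of $\widehat{\mathbb{E}}$, and therefore $E[X] - E[Y] = E[X-Y] \le \widehat{\mathbb{E}}[X-Y] \le 0$. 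Linearity is built in. Let $\Theta$ index the (nonempty) collection of all such dominated linear expectations $E_\theta$.

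Finally I would assemble the representation: for every $X$ and every $\theta$ we have $E_\theta[X] \le \widehat{\mathbb{E}}[X]$, so $\sup_{\theta \in \Theta} E_\theta[X] \le \widehat{\mathbb{E}}[X]$, while the functional $E_X$ produced above lies in the family and attains $\widehat{\mathbb{E}}[X]$; thus the supremum is in fact a maximum equal to $\widehat{\mathbb{E}}[X]$. The only delicate point I anticipate is the Hahn--Banach extension itself --- one must treat $\mathcal{H}$ strictly as a real vector space and confirm that $\widehat{\mathbb{E}}$ is a bona fide sublinear functional there so the classical theorem applies; everything afterwards is routine checking that the expectation axioms survive under domination. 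The finite vector $\boldsymbol{X} = (X_1, \dots, X_n)$ requires no separate treatment: the argument above applies verbatim to any single component, or simultaneously to the finitely many coordinates if a joint statement is wanted.
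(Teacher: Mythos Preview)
Your Hahn--Banach argument is correct and is the standard proof of this representation result. Note, however, that the paper does \emph{not} supply its own proof of Theorem~\ref{the2.1}: it is stated in the appendix as a preliminary fact from the sublinear-expectation literature, with the reader referred to Peng~\cite{peng2010nonlinear} for details. So there is nothing in the paper to compare against; your proposal simply fills in the omitted argument, and does so along the same lines as the cited reference.
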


\begin{defn}\label{def2.3}
(Independent and identically distributed) If for each \(i = 1, 2, \dots\), we have \(X_{i+1} \overset{d}{\mathop{=}} X_i\), and \(X_{i+1}\) is independent of \(\{X_1, \cdots, X_i\}\), then the sequence of \(\mathbb{R}^d\)-valued random variables \(\{X_i\}_{i=1}^{\infty}\) in the sublinear expectation space \(\left( \Omega, \mathcal{H}, \widehat{\mathbb{E}} \right)\) is called independent and identically distributed (i.i.d.).
\end{defn}

\begin{thm}\label{the2.2}
(Non-linear law of large number) Let \(\{ X_i \}_{i=1}^{\infty}\) be a sequence of i.i.d. \(\mathbb{R}^d\)-valued random variables in the sublinear expectation space \(\left( \Omega, \mathcal{H}, \widehat{\mathbb{E}} \right)\), and assume that

\[
\lim_{c \to \infty} \widehat{\mathbb{E}}\left[ \left( |X_1 - c| \right)^+ \right] = 0,
\]
then the sequence \(\left\{ \sum_{i=1}^{n} \frac{X_i}{n} \right\}_{i=1}^{\infty}\) converges in distribution to a maximum distribution:

\[
\lim_{n \to \infty} \widehat{\mathbb{E}}\left[ \varphi \left( \sum_{i=1}^{n} \frac{X_i}{n} \right) \right] = \max_{\mu \in \Gamma} \left[ \varphi(\mu) \right].
\]

For all functions \(\varphi \in \mathcal{C}({\mathbb{R}}^d)\) that satisfy the linear growth condition, \(\Gamma\) is the bounded convex closed set in \({\mathbb{R}}^d\) determined by the following:

\[
\max_{\mu \in \Gamma} \langle \mu, p \rangle = \widehat{\mathbb{E}}[\langle p, X_1 \rangle], \quad p \in {\mathbb{R}}^d.
\]

In particular, when \(d = 1\), we have \(\Gamma = [\overline{\mu}, \underline{\mu}]\), where:

\[
\overline{\mu} = \widehat{\mathbb{E}}[X_i], \quad \underline{\mu} = -\widehat{\mathbb{E}}[-X_i],
\]
which characterizes the uncertainty in the mean of the random variable.
\end{thm}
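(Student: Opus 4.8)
The plan is to prove this by Peng's PDE--interpolation method, assuming the classical theory of first-order Hamilton--Jacobi equations. Write $S_k = \sum_{i=1}^{k} X_i$ with $S_0 = 0$, and set $g(p) := \widehat{\mathbb{E}}[\langle p, X_1\rangle]$ for $p \in \mathbb{R}^d$. By Theorem~\ref{the2.1} together with sub-additivity and positive homogeneity, $g$ is convex and positively homogeneous, hence the support function of a unique bounded closed convex set $\Gamma = \{\mu : \langle \mu, p\rangle \le g(p)\ \text{for all } p\}$, which is exactly the set appearing in the statement. First I would reduce to test functions $\varphi \in \mathcal{C}_{b.\mathrm{Lip}}(\mathbb{R}^d)$: for $\varphi$ of linear growth, pick a bounded Lipschitz $\varphi_R$ coinciding with $\varphi$ on $\{|x| \le R\}$; since $\widehat{\mathbb{E}}[|S_n/n|] \le \widehat{\mathbb{E}}[|X_1|] < \infty$ (a consequence of $\lim_{c\to\infty}\widehat{\mathbb{E}}[(|X_1|-c)^+] = 0$), monotonicity and sub-additivity bound $|\widehat{\mathbb{E}}[\varphi(S_n/n)] - \widehat{\mathbb{E}}[\varphi_R(S_n/n)]|$ by a quantity independent of $n$ that tends to $0$ as $R \to \infty$, so it suffices to treat bounded Lipschitz $\varphi$.

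The core device is the function $u(t,x) := \sup_{\mu\in\Gamma}\varphi(x + t\mu)$ on $[0,1]\times\mathbb{R}^d$; it is Lipschitz in $(t,x)$, satisfies $u(0,\cdot) = \varphi$, and is the viscosity solution of $\partial_t u - g(D_x u) = 0$. Moreover $u(1,0) = \sup_{\mu\in\Gamma}\varphi(\mu) = \max_{\mu\in\Gamma}\varphi(\mu)$, so it suffices to show $\widehat{\mathbb{E}}[\varphi(S_n/n)] = \widehat{\mathbb{E}}[u(0, S_n/n)] \to u(1,0)$. Setting $t_k = 1 - k/n$ and telescoping the real sequence $a_k := \widehat{\mathbb{E}}[u(t_k, S_k/n)]$,
\begin{align*}
u(1,0) - \widehat{\mathbb{E}}\!\left[\varphi\!\left(\tfrac{S_n}{n}\right)\right] = \sum_{k=0}^{n-1}\Bigl(\widehat{\mathbb{E}}\bigl[u(t_k, \tfrac{S_k}{n})\bigr] - \widehat{\mathbb{E}}\bigl[u(t_{k+1}, \tfrac{S_{k+1}}{n})\bigr]\Bigr).
\end{align*}
Because $X_{k+1}$ is independent of $(X_1,\dots,X_k)$ in the sense of Definition~\ref{def2.3}, conditioning gives $\widehat{\mathbb{E}}[u(t_{k+1}, S_{k+1}/n)] = \widehat{\mathbb{E}}[h_k(S_k/n)]$ with $h_k(y) := \widehat{\mathbb{E}}[u(t_{k+1}, y + X_{k+1}/n)]$, so each summand is bounded in absolute value by $\sup_y |u(t_k, y) - h_k(y)|$.

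It remains to estimate $u(t_k,y) - h_k(y)$. Formally Taylor-expanding $u(t_{k+1}, y + X_{k+1}/n) = u(t_k, y) - \tfrac1n\partial_t u(t_k,y) + \tfrac1n\langle D_x u(t_k,y), X_{k+1}\rangle + R$ and applying $\widehat{\mathbb{E}}$ over $X_{k+1}$: the $X_{k+1}$-free part reproduces $u(t_k,y) - \tfrac1n\partial_t u(t_k,y)$, the linear part contributes $\tfrac1n\widehat{\mathbb{E}}[\langle D_x u, X_{k+1}\rangle] = \tfrac1n g(D_x u)$, and these cancel by the PDE $\partial_t u = g(D_x u)$; hence $|u(t_k,y) - h_k(y)| \le \widehat{\mathbb{E}}[|R|]$ (the two-sided bound follows from monotonicity and sub-additivity, with the constants pulled out). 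Summing the $n$ remainders therefore yields a total that vanishes as $n \to \infty$, so $\widehat{\mathbb{E}}[\varphi(S_n/n)] \to \max_{\mu\in\Gamma}\varphi(\mu)$, which, read over all test functions $\varphi$, is precisely convergence in distribution of $S_n/n$ to $M(\Gamma)$.

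The main obstacle is that $u$ is in general only Lipschitz, not $C^1$, so the Taylor step is not literally valid, and the Taylor remainder must be controlled using only a first moment of $X_1$ (no second moment is available). I would handle both exactly as in the classical treatment: replace $u$ by a mollification $u^\delta \in C^\infty$, uniformly close to $u$ and with vanishing defect $\|\partial_t u^\delta - g(D_x u^\delta)\|_\infty \to 0$ as $\delta \to 0$; run the expansion for $u^\delta$, control the second-order remainder by a truncation at level $\{|X_{k+1}| \le \varepsilon_n n\}$ with $\varepsilon_n \to 0$ chosen slowly enough, using $\widehat{\mathbb{E}}[|X_1|\mathbf{1}_{\{|X_1| > \varepsilon_n n\}}] \le 2\widehat{\mathbb{E}}[(|X_1| - \varepsilon_n n/2)^+] \to 0$ on the complement; then let $n \to \infty$ and finally $\delta \to 0$. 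As an independent check of the lower bound, note that for each $\mu \in \Gamma$ there is $P_\mu$ in a family representing $\widehat{\mathbb{E}}$ under which the $X_i$ are classically i.i.d.\ with mean $\mu$, so the classical strong law gives $\widehat{\mathbb{E}}[\varphi(S_n/n)] \ge E_{P_\mu}[\varphi(S_n/n)] \to \varphi(\mu)$, and optimizing over $\mu$ recovers $\liminf_n \widehat{\mathbb{E}}[\varphi(S_n/n)] \ge \max_{\mu\in\Gamma}\varphi(\mu)$. Finally, for $d = 1$ one has $g(p) = \overline{\mu}\,p^+ - \underline{\mu}\,p^-$ with $\overline{\mu} = \widehat{\mathbb{E}}[X_1]$ and $\underline{\mu} = -\widehat{\mathbb{E}}[-X_1]$, so $\Gamma = [\underline{\mu}, \overline{\mu}]$, as claimed.
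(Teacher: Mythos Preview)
The paper does not actually prove this theorem: it is stated in the appendix as a preliminary result with the blanket reference ``More theoretical results and proofs for sublinear expectation can be referred to \cite{peng2010nonlinear}.'' So there is no in-paper argument to compare against; what you have reproduced is precisely Peng's original PDE--interpolation proof from that reference.

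Your sketch is faithful to that source and essentially correct. The identification of $g(p)=\widehat{\mathbb{E}}[\langle p,X_1\rangle]$ as the support function of a bounded closed convex $\Gamma$, the Hopf--Lax representation $u(t,x)=\sup_{\mu\in\Gamma}\varphi(x+t\mu)$ solving $\partial_t u = g(D_x u)$, the telescoping along $t_k=1-k/n$, the use of nonlinear independence to condition out $X_{k+1}$, and the mollification-plus-truncation to control the remainder under only the first-moment-type hypothesis are all the standard steps. Your side remark deriving the lower bound directly via the classical strong law under a fixed $P_\mu$ is a pleasant sanity check, though note it tacitly assumes one can choose a single $P_\mu$ under which the whole sequence is classically i.i.d.\ with mean $\mu$; this is true in natural product-space constructions but is an extra structural assumption beyond Theorem~\ref{the2.1} as stated. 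Finally, you correctly write $\Gamma=[\underline{\mu},\overline{\mu}]$ in the one-dimensional case; the paper's ``$[\overline{\mu},\underline{\mu}]$'' is a typographical slip.
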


\begin{thm}\label{the2.3}
    ($\varphi$-max-min method)
    Let $X_{1},\dots ,X_{n}$ be a $1$-dimensional independent identically distributed sequence in linear space obeying a maximal distribution:
    $X_{i}\overset{d}{=}M_{\left [ \overline{\mu},\underline{\mu} \right ]}, i=1,\dots ,n,$
    where $\underline{\mu} \le \overline{\mu}$ are the two unknown parameters, then for any $P_{\theta},\theta \in \left [ \underline{\mu},\overline{\mu} \right ]$ the following relation holds almost everywhere:
    \begin{equation*}
        \underline{\mu}\le \min\left \{ X_{1}\left ( \omega  \right ) ,\dots,X_{n}\left ( \omega  \right )  \right \} \le \max\left \{ X_{1}\left ( \omega  \right ), \dots,X_{n}\left ( \omega  \right )   \right \} \le \overline{\mu}.
    \end{equation*}
    Further there are: $ \hat{\overline{\mu}}_{n} = \max\left \{ X_{1},\dots,X_{n} \right \} $ is the maximum unbiased estimate of the upper mean $\overline{\mu}$, $ \hat{\underline{\mu}}_{n} = \min\left \{ X_{1},\dots,X_{n} \right \} $ is the minimum unbiased estimate of the lower mean $\underline{\mu}$.
\end{thm}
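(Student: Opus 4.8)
The plan is to reduce the entire statement to the defining identity of the maximal distribution in Definition~\ref{def2.2}, together with the order and constant-preserving axioms of a sublinear expectation (Definition~\ref{def2.1}) and the envelope representation $\widehat{\mathbb{E}}[\,\cdot\,]=\sup_{P\in\mathscr{P}_{\Theta}}E_{P}[\,\cdot\,]$ furnished by Theorem~\ref{the2.1}. Only the ``maximum/minimum unbiased estimate'' clause needs more than these elementary tools, and for that I would appeal to the max-mean estimation theory of~\cite{jin2016optimal,peng2010nonlinear}.

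First I would establish the almost sure confinement $\underline{\mu}\le X_i\le\overline{\mu}$ under every $P\in\mathscr{P}_{\Theta}$ (in particular under each $P_{\theta}$, $\theta\in[\underline{\mu},\overline{\mu}]$). Fix such a $P$ and an index $i$; if $P(X_i>\overline{\mu})>0$, choose a bounded nonnegative Lipschitz $\varphi$ that vanishes on $(-\infty,\overline{\mu}]$ yet is strictly positive on a set of positive $P$-measure for $X_i$ above $\overline{\mu}$, so that $\widehat{\mathbb{E}}[\varphi(X_i)]\ge E_{P}[\varphi(X_i)]>0$, contradicting $\widehat{\mathbb{E}}[\varphi(X_i)]=\max_{v\in[\underline{\mu},\overline{\mu}]}\varphi(v)=0$ from Definition~\ref{def2.2}. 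The mirrored choice of $\varphi$ excludes $P(X_i<\underline{\mu})>0$. Taking minima and maxima over $i=1,\dots,n$, and using $\min\le\max$ trivially, delivers the displayed chain of inequalities.

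Next I would obtain the unbiasedness by a monotonicity sandwich. Since $X_1\le\widehat{\overline{\mu}}_n\le\overline{\mu}$ (the first inequality pointwise, the second from the previous step), monotonicity and constant preservation give $\overline{\mu}=\widehat{\mathbb{E}}[X_1]\le\widehat{\mathbb{E}}[\widehat{\overline{\mu}}_n]\le\widehat{\mathbb{E}}[\overline{\mu}]=\overline{\mu}$, where $\widehat{\mathbb{E}}[X_1]=\max_{v\in[\underline{\mu},\overline{\mu}]}v=\overline{\mu}$ by Definition~\ref{def2.2}; hence $\widehat{\mathbb{E}}[\widehat{\overline{\mu}}_n]=\overline{\mu}$. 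The dual argument for $\widehat{\underline{\mu}}_n=\min_i X_i$, using $\widehat{\underline{\mu}}_n\le X_1$, $\widehat{\underline{\mu}}_n\ge\underline{\mu}$ and $\widehat{\mathbb{E}}[-X_1]=-\underline{\mu}$, yields $-\widehat{\mathbb{E}}[-\widehat{\underline{\mu}}_n]=\underline{\mu}$; so $\widehat{\overline{\mu}}_n$ and $\widehat{\underline{\mu}}_n$ are unbiased for the upper and lower mean respectively.

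Finally, for the qualifier ``maximum'' (resp.\ ``minimum'') unbiased estimate I would invoke~\cite{jin2016optimal}: combining the law of large numbers for maximal distributions (Theorem~\ref{the2.2}) with the representation of the first step, one checks that over $\mathscr{P}_{\Theta}$ the worst case for $\widehat{\overline{\mu}}_n$ is realized at a one-point law $\delta_v$, $v\in[\underline{\mu},\overline{\mu}]$, which collapses the optimality claim to a scalar comparison and identifies $\max_i X_i$ as the asymptotically optimal unbiased estimator of $\overline{\mu}$ (dually $\min_i X_i$ for $\underline{\mu}$). I expect the first two steps to be routine; the main obstacle is this last step, since ``unbiased/optimal'' is here the sublinear-expectation notion rather than the classical one and requires handling the non-product joint law of an i.i.d.\ maximal sequence — I would deal with this by specializing the results of~\cite{jin2016optimal,peng2010nonlinear} rather than reproving them.
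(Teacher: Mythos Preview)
The paper does not give its own proof of this theorem: it is stated in the appendix as a preliminary result from the sublinear-expectation literature, with the blanket sentence ``More theoretical results and proofs for sublinear expectation can be referred to~\cite{peng2010nonlinear}'' standing in for an argument, and the estimator optimality implicitly coming from~\cite{jin2016optimal}. So there is no in-paper proof to compare your proposal against.

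That said, your proposal is correct and is exactly the argument one finds in those references. Step~1 (almost-sure confinement via a test-function contradiction with Definition~\ref{def2.2} and the envelope in Theorem~\ref{the2.1}) and Step~2 (the monotonicity sandwich $\widehat{\mathbb{E}}[X_1]\le\widehat{\mathbb{E}}[\max_i X_i]\le\overline{\mu}$ using constant preservation) are the standard self-contained derivations; Step~3 rightly defers the ``maximum/minimum unbiased'' optimality to~\cite{jin2016optimal}, which is precisely the source the paper leans on. In effect you have supplied the details the paper omits, and your last-step citation matches the paper's own deferral.
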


% kernal function Gaussian kernal
%  which is the Gaussian kernel in this case, i.e.,
% \(K(\boldsymbol{x}, \boldsymbol{x_i}) = \exp \left( -\gamma \left\| \boldsymbol{x} - \boldsymbol{x_i} \right\|^2 \right),\) where the parameter \(\gamma\) is computed based on the number of features in the training samples and the data variance, allowing it to adapt automatically to the distribution and scale of the data.

% % 附录
% \appendix
% For readers' convenience, the programming codes of our proposed algorithms is available at the following website: https://github.com/mathfinance-sdu/Deep-solver-for-stochastic-optimal-control-with-SMP. Some preliminaries and the algorithm for solving problems with non-convex control domain are listed in the appendix.

%    \section{Problems with non-convex control domain}\label{appendix:non-convex}

% \section{Existence and uniqueness results of FBSDEs}\label{appendix:Existence FBSDEs}
% \textcolor{red}{reference: Modify the bibliography format – issues such as capitalization!!!!!}
\bibliographystyle{ieeetr}
\bibliography{references}

\end{document}